\newcolumntype{C}[1]{>{\centering\let\newline\\\arraybackslash\hspace{0pt}}m{#1}}
\newcommand{\R}{\mathbb{R}}
\newcommand{\p}{\partial}
\renewcommand{\vec}[1]{\mathbf{#1}}
\definecolor{webgreen}{rgb}{0,.35,0}
\definecolor{webbrown}{rgb}{.6,0,0}
\definecolor{RoyalBlue}{rgb}{0,0,0.9}
\definecolor{purp}{rgb}{0.6,0.05,0.8}
\definecolor{ora}{rgb}{0.7,0.35,0.02}
\title{Density-Equalizing Maps for Simply-Connected Open Surfaces}
\author{Gary P.~T.~Choi\footnotemark[1] \and Chris H.~Rycroft\footnotemark[1]\ \footnotemark[2]}
\begin{document}

\maketitle
\renewcommand{\thefootnote}{\fnsymbol{footnote}}
\footnotetext[1]{Paulson School of Engineering and Applied Sciences, Harvard University, MA 02138.}
\footnotetext[2]{Department of Mathematics, Lawrence Berkeley Laboratory, Berkeley, CA 94720.}
\renewcommand{\thefootnote}{\arabic{footnote}}

\begin{abstract}
In this paper, we are concerned with the problem of creating flattening maps of simply-connected open surfaces in $\R^3$. Using a natural principle of density diffusion in physics, we propose an effective algorithm for computing density-equalizing flattening maps with any prescribed density distribution. By varying the initial density distribution, a large variety of mappings with different properties can be achieved. For instance, area-preserving parameterizations of simply-connected open surfaces can be easily computed. Experimental results are presented to demonstrate the effectiveness of our proposed method. Applications to data visualization and surface remeshing are explored.
\end{abstract}

\begin{keywords}
Density-equalizing map, cartogram, area-preserving parameterization, diffusion, data visualization, surface remeshing
\end{keywords}

\pagestyle{myheadings}
\thispagestyle{plain}
\markboth{Gary P. T. Choi and Chris H. Rycroft}{Density-Equalizing Maps for Simply-Connected Open Surfaces}


\section{Introduction}
The problem of producing maps has been tackled by scientists and cartographers for centuries. A classical map-making problem is to flatten the globe onto a plane. Numerous methods have been proposed, each aiming to preserve different geometric quantities. For instance, the Mercator projection produces a conformal planar map of the globe: angles and small objects are preserved but the area near the poles is seriously distorted.

One problem in computer graphics closely related to cartogram production is surface parameterization, which refers to the process of mapping a complicated surface to a simpler domain. With the advancement of the computer technology, three-dimensional (3D) graphics have become widespread in recent decades. To create realistic textures on 3D shapes, one common approach is to parameterize the 3D shapes onto $\R^2$. The texture can be designed on $\R^2$ and then be mapped back onto the 3D shapes. Again, different criteria of distortion minimization have led to the invention of a large number of parameterization algorithms.

Gastner and Newman~\cite{Gastner04} proposed an algorithm for producing density-equalizing cartograms based on the diffusion equation. Specifically, given a map and certain data defined on each part of the map (such as the population at different regions), the algorithm deforms the map such that the density, defined by the population per unit area, becomes a constant all over the deformed map. The diffusion-based cartogram generation approach has been widely used for data visualization. For instance, Dorling~\cite{Dorling10} applied this approach to visualize sociological data such as the global population, the income and the age-of-death at different regions. Colizza et al.~\cite{Colizza06} constructed a geographical representation of disease evolution in the United States for an epidemics using this cartogram generation algorithm. Wake and Vredenburg~\cite{Wake08} visualized global amphibian species diversity using the method. Other applications include the visualization of the democracies and autocracies of different countries~\cite{Gleditsch06}, the race/ethnicity distribution of Twitter users in the United States~\cite{Mislove11}, the rate of obesity for individuals in Canada~\cite{Vanasse06}, and the world citation network~\cite{Pan12}.

Inspired by the above approach, we develop an efficient finite-element algorithm for computing density-equalizing flattening maps of simply-connected open surfaces in $\R^3$ onto $\R^2$. Given a simply-connected open triangulated surface and certain quantities defined on all triangle elements of the surface, we first flatten the surface onto $\R^2$ by a natural flattening map. Then, the flattened surface is deformed according to the given quantities using a fast iterative scheme. Furthermore, by altering the input quantities defined on the triangle elements, flattening maps with different properties can be achieved. For instance, area-preserving parameterizations of simply-connected open surfaces can be easily obtained. 

\subsection{Contribution}
The contribution of our work for computing density-equalizing flattening maps of simply-connected open surfaces is as follows.
\begin{enumerate}[(i)]
 \item Our approach is applicable to a wider class of surfaces when compared to the previous approach by Gastner and Newman~\cite{Gastner04}. The previous approach~\cite{Gastner04} works for two-dimensional (2D) domains while ours works for simply-connected open surfaces in $\R^3$.
 \item We propose a linear formulation for computing a boundary-aware convex flattening map of simply-connected open surfaces. The flattening map effectively preserves the curvature of the input surface boundary and serves as a good initial mapping for the subsequent density-equalizing process.
 \item We propose a new scheme for constructing an auxiliary region for the density diffusion. When compared to the previous approach~\cite{Gastner04}, which makes use of a regular rectangular grid for constructing the auxiliary region, our approach produces a more adaptive auxiliary region that requires fewer points and hence reduces the computational cost. 
 \item We propose a finite-element iterative scheme for solving the density-diffusion problem without introducing the Fourier space as in the previous approach~\cite{Gastner04}. The scheme accelerates the computation for density-equalizing maps, with the accuracy well-preserved.
 \item Our proposed algorithm can be used for a wide range of applications, including the computation of area-preserving parameterizations, data visualization, and surface remeshing.
\end{enumerate}

\subsection{Organization of the paper}
In Section \ref{sect:previous}, we review the previous works on cartogram generation and surface parameterization. The physical principle of density-equalization is outlined in Section \ref{sect:background}. In Section \ref{sect:main}, we describe our proposed method for achieving density-equalizing flattening maps of simply-connected open surfaces. Experimental results are presented in Section \ref{sect:experiment} for analyzing our proposed algorithm. In Section \ref{sect:applications}, we discuss two applications of our algorithm. In Section \ref{sect:discussion}, we conclude this paper with a discussion on the limitation of our current approach and possible future works. 

\section{Previous work} \label{sect:previous}
The problem of map generation has been studied by cartographers, geographers and scientists for centuries. Readers are referred to Dorling~\cite{Dorling96} for a short survey of pre-existing cartogram production methods. Edelsbrunner and Waupotitsch~\cite{Edelsbrunner97} proposed a combinatorial approach to construct homeomorphisms with prescribed area distortion for cartogram generation. Keim et al.~\cite{Keim04} developed the Cartodraw algorithm for producing contiguous cartograms. Gastner and Newman~\cite{Gastner04} proposed a cartogram production algorithm based on density diffusion. Keim et al.~\cite{Keim05} proposed to use medial-axis-based transformations for making cartograms. 

In this work, cartogram generation is shown to be closely related to surface parameterization. For surface parameterization, a large variety of algorithms have been proposed by different research groups. There are two major classes of surface parameterization algorithms, namely conformal parameterization and authalic parameterization. Conformal parameterization aims to preserve the angles and hence the infinitesimal shapes of the surfaces while sacrificing the area ratios. By contrast, authalic parameterization aims to preserve the area measure of the surfaces while neglecting angular distortions. For the existing parameterization algorithms, we refer the readers to the comprehensive surveys~\cite{Floater05,Sheffer06,Hormann07}. We highlight the works on surface parameterization in the recent decade. For conformal parameterization, recent advances include the discrete Ricci flow method~\cite{Jin08,Yang09,Zhang14} and the quasi-conformal composition~\cite{Choi15a,Choi15b,Choi16,Meng16,Choi17}. For area-preserving parameterization, the state-of-the-art approaches are mainly based on Lie advection of differential 2-forms~\cite{Zou11} and optimal mass transport~\cite{Zhao13,Su16}. Recently, Nadeem et al.~\cite{Nadeem16} proposed an algorithm for achieving spherical parameterization with controllable area distortion.

\section{Background} \label{sect:background}
Our work aims to produce flattening maps based on a physical principle of diffusion. The diffusion-based method for producing cartogram proposed by Gastner and Newman~\cite{Gastner04} is outlined as follows. For the rest of the paper, we denote the method by Gastner and Newman~\cite{Gastner04} as \textit{GN}. Given a planar map and a quantity called the \textit{population} defined on every part of the map, let $\rho$ be the density field defined by the quantity per unit area. The map can be deformed by equalizing the density field $\rho$ using the advection equation
\begin{equation}
\frac{\p \rho}{\p t} = - \nabla \cdot \vec{j}
\end{equation}
where the flux is given by Fick's law,
\begin{equation}
\vec{j} = - \nabla \rho.
\end{equation}
This yields the diffusion equation
\begin{equation}\label{eqt:diffusion_continuous}
\frac{\p \rho}{\p t} = \Delta \rho. 
\end{equation}
Since time can be rescaled in the subsequent analysis, the diffusion constant in Fick's law is set to 1. Any tracers that are being carried by this density flux will move with velocity
\begin{equation} \label{eqt:velocity}
  \vec{v}(\vec{r},t) = \frac{\vec{j}}{\rho} = - \frac{\nabla \rho}{\rho}.
\end{equation}

If \eqref{eqt:diffusion_continuous} is solved to steady state, and the map is deformed according to the velocity field in \eqref{eqt:velocity}, then the final state of the map will have equalized density. To track the deformation of the map, Gastner and Newman introduce tracers $\vec{r}(t)$ that follow the velocity field according to
\begin{equation}
\vec{r}(t) = \vec{r}(0) + \int_0^t \vec{v}(\vec{r},\tau) d \tau.
\end{equation}
In other words, taking $t \to \infty$, the above displacement $\vec{r}(t)$ produces a map that achieves equalized density per unit area. To avoid infinite expansion of the map, GN proposed to construct a large rectangular auxiliary region, called the \textit{sea}, surrounding the region of interest. By defining the density at the sea to be the average density of the region of interest, it can be ensured that the area of the deformed map is as same as that of the initial map. In GN, the above procedures were developed using finite difference grids and the above equations were solved in Fourier space. 

There is room for improvement of the abovementioned approach in two major aspects. First, the above 2D finite difference approach works for planar domains but not for general simply-connected open surfaces in $\R^3$. Second, the large rectangular sea and the large number of grid points may cause long computational time. In this work, an algorithm that further enhances the abovementioned approach in the two aspects is proposed. The details of our proposed algorithm in described in the following section. 

\section{Our proposed method} \label{sect:main}
We now describe our method for computing density-equalizing maps of simply-connected open surfaces in $\R^3$. Let $S$ be a simply-connected open surface in $\R^3$ and $\rho$ be a prescribed density distribution. Our goal is to compute a flattening map $f:S \to \R^2$ such that the Jacobian $J_f$ satisfies 
\begin{equation}
J_f \propto \rho.
\end{equation}
In other words, the final density per unit area in the flattening map becomes a constant.

Our proposed algorithm primarily consists of three steps, described in Secs.~\ref{sect:step1}, \ref{sect:step2}, and \ref{sect:step3}. We remark that if the input surface is planar, the first step can be skipped. In the following discussions, $S$ is discretized as a triangular mesh $(\mathcal{V}, \mathcal{E}, \mathcal{F})$ where $\mathcal{V}$ is the vertex set, $\mathcal{E}$ is the edge set and $\mathcal{F}$ is the triangular face set. $\rho$ is discretized as $\rho^{\mathcal{F}}$ on every triangle element $T \subset \mathcal{F}$.

\subsection{Initialization: Fast curvature-based flattening map} \label{sect:step1}
To compute the density-equalization process, the first step is to flatten $S$ onto $\R^2$. To minimize the discrepancy between the surface and the flattening result, it is desirable that the outline of the flattening result is similar to the surface boundary. We first simplify the problem by considering only the curve flattening problem of the surface boundary. Then, we construct a surface flattening map based on the curve flattening result.

\subsubsection{Curvature-based flattening of the surface boundary}
Let $\gamma$ be the boundary of the given surface $S$. Note that $\gamma$ is a simple closed curve in $\R^3$ and hence we can write it as an arc-length parameterized curve $\gamma = \gamma(t):[0,l_{\gamma}] \to \R^3$, where $l_{\gamma}$ is the total arc length of $\gamma$. Our goal is to flatten $\gamma$ onto $\R^2$ using a map $\varphi: [0,l_{\gamma}] \to \R^2$ and then obtain the entire flattening map of the surface $S$. For $\gamma$, we can compute two quantities: the curvature $\kappa_{\gamma}$ and the torsion $\tau_{\gamma}$. Note that the curvature $\kappa_{\gamma}$ measures the deviation of $\gamma$ from a straight line, and the torsion $\tau_{\gamma}$ measures the deviation of $\gamma$ from a planar curve. We have the following important theorem.

\begin{theorem}[Fundamental theorem of space curves~\cite{docarmo76}] 
The curve $\gamma$ is completely determined (up to rigid motion) by its curvature $\kappa_{\gamma}$ and its torsion $\tau_{\gamma}$. 
\end{theorem}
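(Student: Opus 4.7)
The plan is to reduce the assertion to an existence-and-uniqueness statement for a linear ODE system via the Frenet-Serret frame. At each point of the arc-length parameterization (assuming $\kappa_{\gamma} > 0$ so the principal normal is well defined), form the orthonormal frame $(T, N, B)$ consisting of the unit tangent, principal normal, and binormal. Differentiating the six orthonormality relations and using the definitions of $\kappa_{\gamma}$ and $\tau_{\gamma}$ yields the Frenet-Serret equations $T' = \kappa_{\gamma} N$, $N' = -\kappa_{\gamma} T + \tau_{\gamma} B$, $B' = -\tau_{\gamma} N$, which in matrix form read $F' = A(t) F$ with $F = (T, N, B)^{\top}$ and $A(t)$ a $3 \times 3$ skew-symmetric matrix depending only on the prescribed $\kappa_{\gamma}$ and $\tau_{\gamma}$.

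For uniqueness, I would suppose two curves $\gamma_1$ and $\gamma_2$ share the same curvature and torsion, and first apply a rigid motion to $\gamma_2$ so that at $t = 0$ the two curves agree on a basepoint and on a Frenet frame. Their frames $F_1$ and $F_2$ then solve the same linear ODE with identical initial conditions, so the Picard-Lindel\"of theorem forces $F_1 \equiv F_2$, and integrating $\gamma_i(t) = \gamma_i(0) + \int_0^t T_i(s)\, ds$ yields $\gamma_1 \equiv \gamma_2$.

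For existence (needed to know the theorem is not vacuous), prescribe any orthonormal initial frame $F(0)$ and solve $F' = A F$ on $[0, l_{\gamma}]$ by standard linear ODE theory. The delicate point is to verify that $F(t)$ stays orthonormal: set $G(t) = F(t) F(t)^{\top} - I$ and use $A^{\top} = -A$ to obtain $G' = A G + G A^{\top}$, a homogeneous linear ODE in $G$ with trivial initial data $G(0) = 0$, so $G \equiv 0$ by uniqueness. Defining $\gamma(t) = \int_0^t T(s)\, ds$, a direct check recovers $\|T\| = 1$, the prescribed $\kappa_{\gamma}$ from $T' = \kappa_{\gamma} N$, and the prescribed $\tau_{\gamma}$ from $B' = -\tau_{\gamma} N$.

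The orthonormality preservation step is the main obstacle: without exploiting the skew-symmetry of $A$ arising from cross-differentiating the frame's orthonormality constraints, nothing a priori forces the ODE-produced $F(t)$ to be a genuine Frenet frame, and one must route the argument through ODE uniqueness rather than attempt a direct algebraic verification. Everything else reduces to standard facts about linear systems with continuous coefficients on a compact interval.
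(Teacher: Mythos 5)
Your argument is correct, but note that the paper does not prove this statement at all: it is quoted as a classical result with a citation to do Carmo, and the paper's actual original contribution (proved in its appendix) concerns the planar curve $\varPhi$, not this theorem. What you have written is the standard textbook proof---reduce to the linear Frenet--Serret system $F' = A(t)F$ with $A$ skew-symmetric, get uniqueness up to rigid motion from Picard--Lindel\"of after normalizing the initial frame, and get existence with the $G = FF^{\top} - I$ trick to preserve orthonormality---and it is essentially the proof the cited reference gives. Two minor points you could tighten: the theorem as used requires $\kappa_{\gamma} > 0$ for the principal normal to exist (you note this parenthetically, but it is a genuine hypothesis, not a convenience), and in the existence half you should also observe that $\det F(t) \in \{\pm 1\}$ is continuous and equals $+1$ at $t=0$, so the constructed frame stays right-handed and $B = T \times N$ is indeed the binormal, which is what lets you read off the prescribed torsion from $B' = -\tau_{\gamma} N$.
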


Motivated by the above theorem, we consider mapping $\gamma$ to $\varphi(\gamma)$ such that $\kappa_{\gamma} \approx \kappa_{\varphi(\gamma)}$ and $\tau_{\varphi(\gamma)} = 0$. In other words, we project $\gamma$ onto the space of planar convex curves such that the curvature is preserved as much as possible. 
By Frenet--Serret formulas~\cite{docarmo76}, 
\begin{equation}
\vec{T}'(s) = \kappa_{\gamma}(s) \|\gamma'(s)\| \vec{N}(s)
\end{equation}
where $\vec{T}$ and $\vec{N}$ are respectively the unit tangent and unit normal of $\gamma$. It follows that
\begin{equation}
\kappa_{\gamma}(s) = \frac{\|\vec{T}'(s)\|}{\|\gamma'(s)\|}.
\end{equation}
After obtaining $\kappa_{\gamma}$, our goal is to construct a projection of $\gamma$ onto the space of planar simple convex closed curves. Note that for any simple closed planar curve $\mathcal{C} \subset \R^2$, the total signed curvature of $\mathcal{C}$ is a constant~\cite{docarmo76}:
\begin{equation}
\int_{\mathcal{C}} k_{\mathcal{C}}(s) ds = 2 \pi.
\end{equation}
Now, to construct a closed planar curve $\varphi$ with total arclength same as $\gamma$, we set the target signed curvature $k$ to be 
\begin{equation} \label{eqt:nonnegative_curvature_gamma}
k(s) = \frac{ 2 \pi \kappa_{\gamma}(s)}{\int_\gamma \kappa_{\gamma}(t) dt} \geq 0.
\end{equation}
Then, consider the curve
\begin{equation}
\varphi(s) = \left(\int_0^s \cos \theta(u) du, \int_0^s \sin \theta(u) du\right),
\end{equation}
where 
\begin{equation}
\theta(u) = \int_0^u k(t) dt.
\end{equation}
It is easy to check that
\begin{equation}
\varphi'(s) = (\cos \theta(s), \sin \theta(s))
\end{equation}
and hence $\varphi$ is an arclength parameterized curve. Moreover, we have
\begin{equation}
k_{\varphi}(s) = \theta'(s) = k(s) \geq 0.
\end{equation}
However, it should be noted that $\varphi$ may be a closed curve. In other words, there may be a small gap between $\varphi(0)$ and $\varphi(l_{\gamma})$ with $0 \leq \|\varphi(l_{\gamma})-\varphi(0)\| \ll L$, where $L$ is the total arclength of $\varphi$. To enforce that $\varphi$ is closed, we consider updating it by
\begin{equation}
\varphi(s) \leftarrow \varphi(s) - \frac{s}{l_{\gamma}}\left(\varphi(l_{\gamma}) - \varphi(0)\right).
\end{equation}
In fact, $\varphi$ becomes a simple closed convex plane curve under this adjustment. The proof is provided in the appendix.


The algorithm is summarized in Algorithm \ref{alg:embed_curvature}. After obtaining the simple closed convex plane curve $\varphi$, we can use it as a convex boundary constraint and compute a map $\phi: S \to \R^2$ as an initial flattening map of the entire surface $S$. Two methods for surface flattening are suggested below.

\begin{algorithm}[h!]
\label{alg:embed_curvature}
\KwIn{The boundary $\gamma$ of a simply-connected open surface $S$ in $\R^3$.}
\KwOut{A curvature-based flattened curve $\varphi$.}
\BlankLine
Let $\gamma = \{v_j\}_{j=1}^b$ be the boundary vertices of $S$ in anti-clockwise order. Compute the curvature $\kappa =  \frac{\|\vec{T}'\|}{\|\gamma'\|}$\;
Rescale $\kappa$ by $\kappa \leftarrow \frac{ 2 \pi \kappa}{\int_\gamma \kappa(s) ds}$\;

Obtain the flattened curve
$\varphi(s) = \left(\int_0^s \cos \theta(u) du, \int_0^s \sin \theta(u) du\right)$, where $\theta(u) = \int_0^u \kappa(t) dt$\;

Adjust the map by $\varphi(s) \leftarrow \varphi(s) - \frac{s}{l_{\gamma}}\left(\varphi(l_{\gamma}) - \varphi(0)\right)$\;
\caption{Curvature-based curve flattening}
\end{algorithm}

\subsubsection{Curvature-based Tutte flattening map}
One way to construct a bijective planar map $\phi$ is the graph embedding method of Tutte~\cite{Tutte63}. To give an overview of the method, we first introduce the concept of adjacency matrix. The \textit{adjacency matrix} $M$ is a $|\mathcal{V}| \times |\mathcal{V}|$ matrix defined by
\begin{equation}
M_{ij} = \left\{\begin{array}{ll}
1 & \text{ if } [i,j] \in \mathcal{E},\\
0 & \text{ otherwise}.
\end{array}\right.
\end{equation}
In other words, the adjacency matrix only takes the combinatorial information of the input triangular mesh into account and neglects the geometry of it. It was proved by Tutte~\cite{Tutte63} that there exists a bijective map $\phi$ between any simply-connected open triangulated surface $S$ in $\R^3$ and any convex polygon $P$ on $\mathbb{C}$ with the aid of the adjacency matrix. More explicitly, by representing $\phi$ as a complex column vector with length $|\mathcal{V}|$, $\phi$ can be obtained by solving the complex linear system
\begin{equation}
\left\{ \begin{array}{ll}
M^{\text{Tutte}} \phi(v) = 0 & \text{ if } v \in S \setminus \partial S, \\
\phi(\partial S) = \partial P,
\end{array}\right.
\end{equation}
where
\begin{equation}
M^{\text{Tutte}}_{ij} = \left\{\begin{array}{ll}
M_{ij} & \text{ if } [x_i,x_j] \in \mathcal{E},\\
- \sum_{t \neq i} M_{it} & \text{ if } j = i,\\
0 & \text{ otherwise.}
\end{array}\right.
\end{equation}
Here the boundary mapping $\phi: \p S \to \p P$ can be any bijective map. Using our curvature-based flattened curve $\varphi$ as the convex boundary constraint, a bijective Tutte flattening map $\phi: S \to P$ can be easily obtained as described in Algorithm \ref{alg:tutte_curvature}.

\begin{algorithm}[h!]
\label{alg:tutte_curvature}
\KwIn{A simply-connected open surface $S$ in $\R^3$.}
\KwOut{A curvature-based flattening map $\phi: S \to \R^2$.}
\BlankLine
Let $\gamma = \{v_j\}_{j=1}^b$ be the boundary vertices of $S$. Compute the curvature-based curve flattening $\varphi: \gamma \to \mathbb{C}$\;

Compute the adjacency matrix $M$ with $M_{ij} = \left\{\begin{array}{ll}
1 & \text{ if } [i,j] \in \mathcal{E}\\
0 & \text{ otherwise}
\end{array}\right.$\;

Solve the linear system $\left\{ \begin{array}{ll}
M^{\text{Tutte}} \phi(v) = 0 & \text{ if } v \in S \setminus \p S, \\
\phi(v_j) = \varphi(v_j) & \text{ for all } \{v_j\}_{j=1}^b,
\end{array}\right.$ and obtain the desired map $\phi$\;

\caption{Curvature-based Tutte flattening map}
\end{algorithm}

\subsubsection{Curvature-based locally authalic flattening map}

\begin{figure}[t!]
 \centering
 \includegraphics[width=0.3\textwidth]{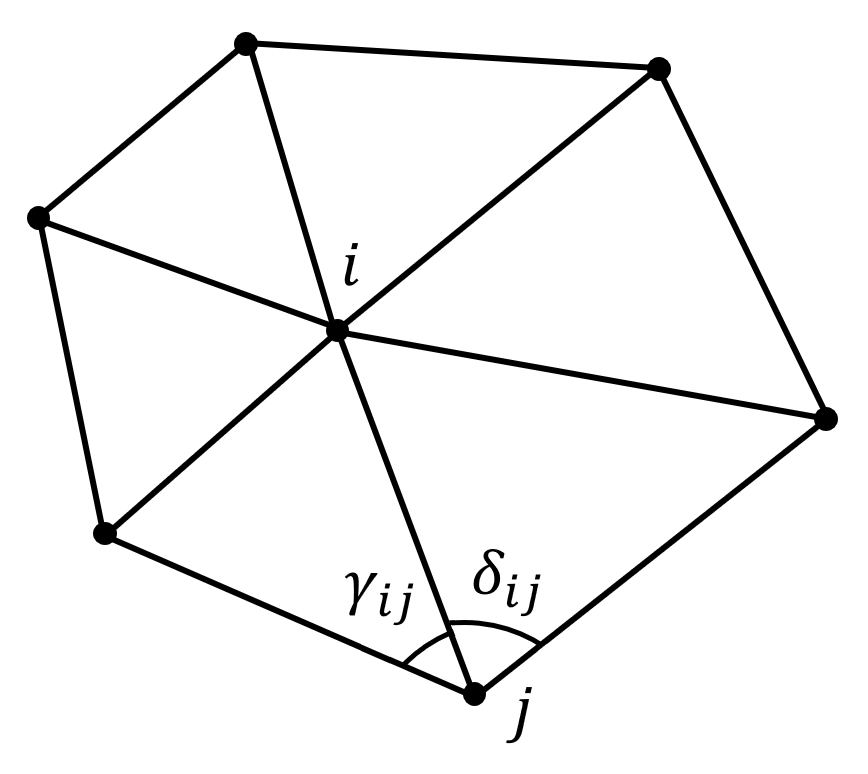}
 \caption{The angles $\gamma_{ij}$ and $\delta_{ij}$ in the locally authalic Chi energy.}
 \label{fig:authalic}
\end{figure}

By changing the matrix $M^{\text{Tutte}}$ in the above method, another way to construct $\phi$ can be obtained. Desbrun et al.~\cite{Desbrun02} proposed a mapping scheme by minimizing the quadratic Chi energy
\begin{equation}
E_{\chi}(\phi) = \sum_{j \in N(i)} \frac{\cot \gamma_{ij} + \cot \delta_{ij}}{|x_i - x_j|^2} |\phi(x_i) - \phi(x_j)|^2,
\end{equation}
where $\gamma_{ij}$ and $\delta_{ij}$ are the two angles at $x_j$ as illustrated in Figure \ref{fig:authalic}. The minimization of the Chi energy aims to find a locally authalic mapping $\phi:S \to \R^2$ that preserves the local 1-ring area at every vertex as much as possible. The associated authalic matrix of this energy is given by
\begin{equation}
M^{\chi}_{ij} = \left\{\begin{array}{ll}
\frac{\cot \gamma_{ij} + \cot \delta_{ij}}{|x_i - x_j|^2} & \text{ if } [x_i,x_j] \in \mathcal{E},\\
- \sum_{t \neq i} M^{\chi}_{it} & \text{ if } j = i,\\
0 & \text{ otherwise.}
\end{array}\right.
\end{equation}

Now consider replacing $M^{\text{Tutte}}$ in the Tutte flattening algorithm by $M^{\chi}$ and solve for a new flattening map. It is noteworthy that the minimizer of the Chi energy is not a globally optimal area-preserving mapping. Nevertheless, it serves as a reasonably good and simple initialization for our density-equalization problem. More explicitly, using our curvature-based boundary constraint, $\phi$ can be obtained by solving the following complex linear system
\begin{equation}
\left\{ \begin{array}{ll}
M^{\chi} \phi(v) = 0 & \text{ if } v \in S \setminus \p S, \\
\phi(\p S) = \varphi.
\end{array}\right.
\end{equation}
The curvature-based locally authalic flattening map is summarized in Algorithm \ref{alg:chi_curvature}.

\begin{algorithm}[h!]
\label{alg:chi_curvature}
\KwIn{A simply-connected open surface $S$ in $\R^3$.}
\KwOut{A curvature-based locally authalic flattening map $\phi: S \to \R^2$.}
\BlankLine
Let $\gamma = \{v_j\}_{j=1}^b$ be the boundary vertices of $S$. Compute the curvature-based curve flattening $\varphi:\gamma \to \mathbb{C}$\;

Compute the authalic matrix 
$M^{\chi}_{ij} = \left\{\begin{array}{ll}
\frac{\cot \gamma_{ij} + \cot \delta_{ij}}{|x_i - x_j|^2} & \text{ if } [x_i,x_j] \in \mathcal{E},\\
- \sum_{t \neq i} M^{\chi}_{it} & \text{ if } j = i,\\
0 & \text{ otherwise}
\end{array}\right.$\;

Solve the linear system $\left\{ \begin{array}{ll}
M^{\chi} \phi(v) = 0 & \text{ if } v \in S \setminus \p S, \\
\phi(v_j) = \varphi(v_j) & \text{ for all } \{v_j\}_{j=1}^b,
\end{array}\right.$ and obtain the desired map $\phi$\;

\caption{Curvature-based locally authalic flattening map}
\end{algorithm}

We remark that both of the two curvature-based flattening algorithms above are a good choice of initialization for our problem for the following reasons:
\begin{enumerate}[(i)]
\item The boundary is flattened as a convex closed planar curve. The convex boundary constraint leads to a bijective flattening map of the surface using our proposed algorithm.
\item The computation of the flattening maps is highly efficient. Both algorithms only involve solving one complex linear system without any iterative procedures.
\item Unlike other conventional parameterizations such as conformal parameterizations, our curvature-based flattening maps result in a relatively uniform distribution of vertices on $\mathbb{C}$ and avoid shrinking particular regions. The diffusion process can then be more accurately executed. 
\end{enumerate}

\subsection{Construction of sea via reflection} \label{sect:step2}

\begin{figure}[t!]
 \centering
 \includegraphics[width=0.4\textwidth]{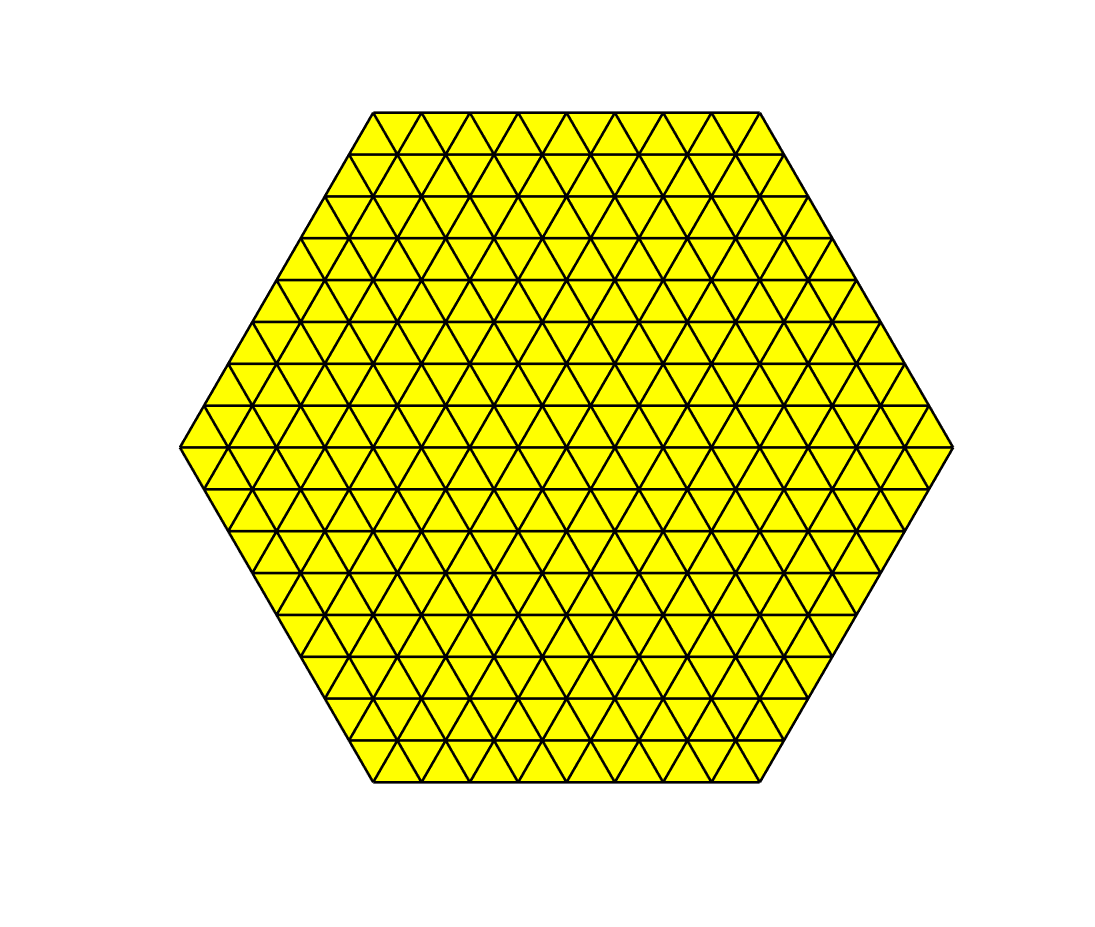}
 \includegraphics[width=0.35\textwidth]{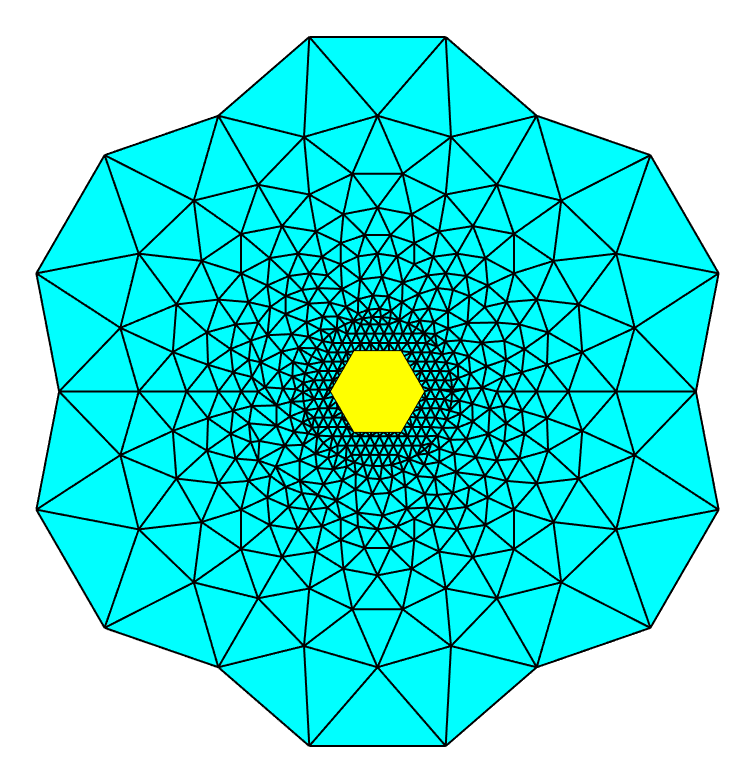}
 \caption{Illustration of our algorithm for constructing the sea. Left: the initial flattening map. We put it inside the unit circle and fill up the gap with uniformly distributed points, and then reflect the entire region along the circle to construct the sea. Right: the sea constructed (in cyan) and the initial flattening map (in yellow).}
 \label{fig:sea}
\end{figure}

In the diffusion-based approach of GN, one important step is to set up a sea surrounding the area of interest. Setting the initial density at the sea as the mean density ensures a proper deformation of the area of interest, and avoids arbitrary expansion of the region under the diffusion process. In this work, we propose a new method for the construction of such a sea for the diffusion.

If the simply-connected open surface $S$ is not planar, then the abovementioned curvature-based flattening methods give us an initial flattening map $\vec{r}_0 = \phi(S)$ in $\R^2$. If $S$ is initially planar, we skip the above step and set $\vec{r}_0 = S$. In other words, we treat $S$ itself as the initial flattening map. 

Now, we shrink the initial map $\vec{r}_0$ and place it inside the unit circle $\mathbb{S}^1 := \{z \in \mathbb{C}: |z|=  1\}$. Note that there will be certain gaps between the shrunk map and the circular boundary. Denote the edge length of the shrunk flattening map by $l$. We fill up the gaps using uniformly distributed points with distance $l$. This process results in an even distribution of points all over the unit disk $\mathbb{D} := \{z \in \mathbb{C}: |z| \leq 1\}$. We then triangulate the new points using the Delaunay triangulation. This gives us a triangulation $\mathbb{D}_T$ of the unit disk.

Next, we aim to construct a sea surrounding the unit disk in a natural way. Consider the reflection mapping $g:\mathbb{D} \to \mathbb{C} \setminus \mathbb{D}$ defined by
\begin{equation}
g(z) = \frac{1}{\bar{z}}.
\end{equation}
It is easy to observe that $g$ is bijective. In the discrete case, the above map sends the triangulated unit disk $\mathbb{D}_T$ to a large polygonal region $R$ in $\mathbb{C}$ with the region of $\mathbb{D}$ punctured. We now glue $\mathbb{D}_T$ and $g(\mathbb{D}_T)$ along the circular boundary $\p \mathbb{D}_T$. More explicitly, denote the glued mesh by $\widetilde{S} = (\widetilde{\mathcal{V}}, \widetilde{\mathcal{E}}, \widetilde{\mathcal{F}})$. We have
\begin{equation}
\widetilde{\mathcal{V}} = \{z\}_{z \in \mathbb{D}_T} \cup \left\{ \frac{1}{\bar{z}} \right\}_{z \in \mathbb{D}_T \setminus \p(\mathbb{D}_T)},
\end{equation}
\begin{equation}
\widetilde{\mathcal{F}} = \mathcal{F} \cup \left\{\left[\frac{1}{\bar{z}_i}, \frac{1}{\bar{z}_j}, \frac{1}{\bar{z}_k}\right]: [z_i, z_j, z_k] \in \mathcal{F} \right\},
\end{equation}
and
\begin{equation}
\widetilde{\mathcal{E}} = \{[z_i, z_j]: [z_i, z_j] \text{ is an edge of a face } T\in \widetilde{\mathcal{F}}\}.
\end{equation}
To get rid of the extremely large triangles at the outermost part of the glued mesh, we perform a simple truncation by removing the part far away from the unit disk $\mathbb{D}$. In practice, we remove all vertices and faces of $\widetilde{S}$ outside $\{z: |z| > 5\}$. Finally, we rescale the glued mesh to restore the size of the flattening map. By an abuse of notation, we continue using $\vec{r}_0$ to represent the entire region.

Now we have constructed a natural complement surrounding our region of interest in $\vec{r}_0$. We proceed to set up the density distribution at the complement part. As suggested in GN, the density at the complement part should equal the mean density at the interior part. For every face $T \in \widetilde{\mathcal{F}} \setminus \mathcal{F}$, we set
\begin{equation}
\rho^{\widetilde{\mathcal{F}}}(T) = \text{mean}_{T' \in \mathcal{F}} \rho^{\mathcal{F}}(T').
\end{equation}
This completes our construction of the sea. The above procedures are summarized in Algorithm \ref{alg:sea}. An graphical illustration of the construction is shown in Figure \ref{fig:sea}.

\begin{algorithm}[h]
\label{alg:sea}
\KwIn{An initial flattening map $\vec{r}_0$.}
\KwOut{An updated map $\vec{r}_0$ with a sea surrounding the original domain.}
\BlankLine
Shrink $\vec{r}_0$ to sit inside the unit circle $\mathbb{S}^1$\;
Fill up the gaps between the unit circle and the shrunk map by uniformly distributed points with distance $l$, where $l$ is the average edge length of $\vec{r}_0$.\;
Perform a constrained Delaunay triangulation that triangulates the unit disk with the newly added points. The connectivity of $\vec{r}_0$ is kept unchanged\;
Apply the reflection map $g(z) = \frac{1}{\overline{z}}$ to the triangulated unit disk $\mathbb{D}_T$\;
Glue $\mathbb{D}_T$ and $g(\mathbb{D}_T)$. Update $\vec{r}_0$ by the glued result\;
Remove all vertices and faces of $\vec{r}_0$ outside $\{z: |z| > 5\}$\;
Rescale $\vec{r}_0$ to restore the size of the flattening map\;
\caption{Construction of sea via reflection.}
\end{algorithm}

We now highlight the advantages of our construction of sea. One advantage of our construction is that the mesh size of the constructed sea is adaptive. Unlike the approach in GN, which used an uniform finite difference grid for the sea, our construction produces a natural distribution of points at the sea that avoids redundant computation.

More specifically, let $z_1, z_2$ be two points at the interior of the unit disk $\mathbb{D}$. It can be observed that under the reflection $z \mapsto \frac{1}{\overline{z}}$, we have
\begin{equation}
\left|\frac{1}{\overline{z_1}} - \frac{1}{\overline{z_2}}\right| = \frac{|\overline{z_1}-\overline{z_2}|}{|\overline{z_1}\overline{z_2}|} = \frac{|z_1-z_2|}{|z_1 z_2|}.
\end{equation}
This implies that if $z_1, z_2$ are located near the origin, the distance between the reflected points $\frac{1}{\overline{z_1}}$ and $\frac{1}{\overline{z_2}}$ will satisfy
\begin{equation}
\left|\frac{1}{\overline{z_1}} - \frac{1}{\overline{z_2}}\right| \gg |z_1-z_2|,
\end{equation}
since $|z_1 z_2| \ll 1$. On the other hand, if $z_1, z_2$ are located near the unit circle $\mathbb{S}^1$, we have
\begin{equation}
\left|\frac{1}{\overline{z_1}} - \frac{1}{\overline{z_2}}\right| \approx |z_1-z_2|,
\end{equation}
since $|z_1 z_2| \approx 1$.

One important consequence of the above observation is that the outermost region of the sea, which stays far away from the region of interest, consists of the coarsest triangulations. By contrast, the innermost region of the sea closest to the unit circle has the densest triangulations. This natural transition of mesh sparsity  of the sea helps reducing the number of points needed for the subsequent computation without affecting the accuracy of the result.

Another advantage of our construction is the improvement on the shape of the sea. In GN, a rectangular sea is used for the finite difference framework. The four corner regions are usually unimportant for the subsequent deformation and hence a large amount of spaces and computational efforts are wasted. By contrast, our reflection-based framework can easily overcome the above drawback. In our construction of the sea, the reflection together with the truncation produces a sea with a more regular shape. This utilizes the use of every point at the sea and prevents any redundant computations.

Finally, note that the sea is constructed simply for the diffusion process and we are only interested in the interior region. In the following discussions, by an abuse of notation, we continue using the alphabets $\mathcal{V}$, $\mathcal{F}$ without tilde whenever referring to the discrete mesh structure.

\subsection{Iterative scheme for producing density-equalizing maps} \label{sect:step3}
\begin{figure}[t!]
\centering
\includegraphics[width=0.3\textwidth]{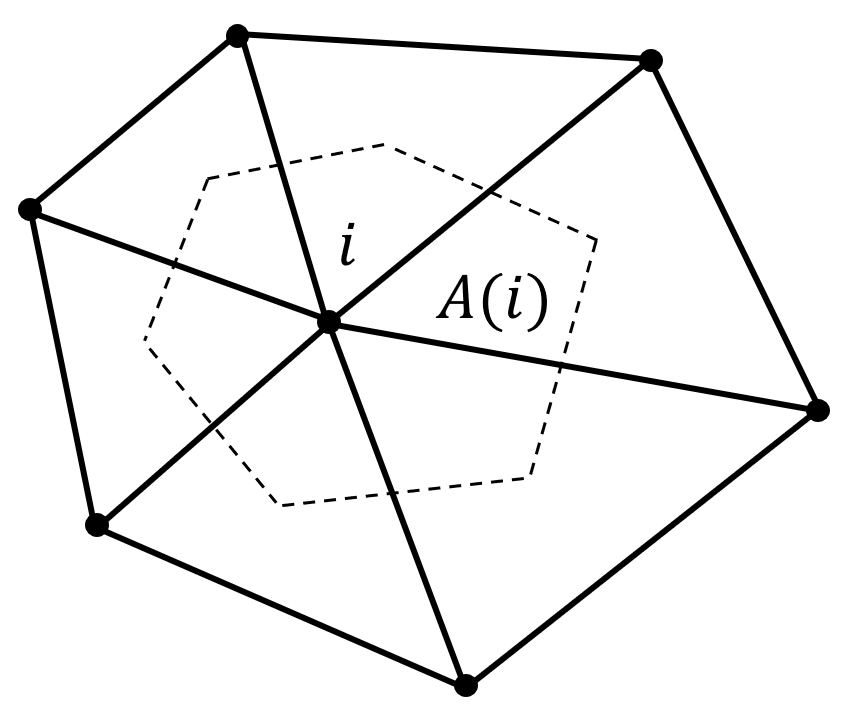}
\includegraphics[width=0.3\textwidth]{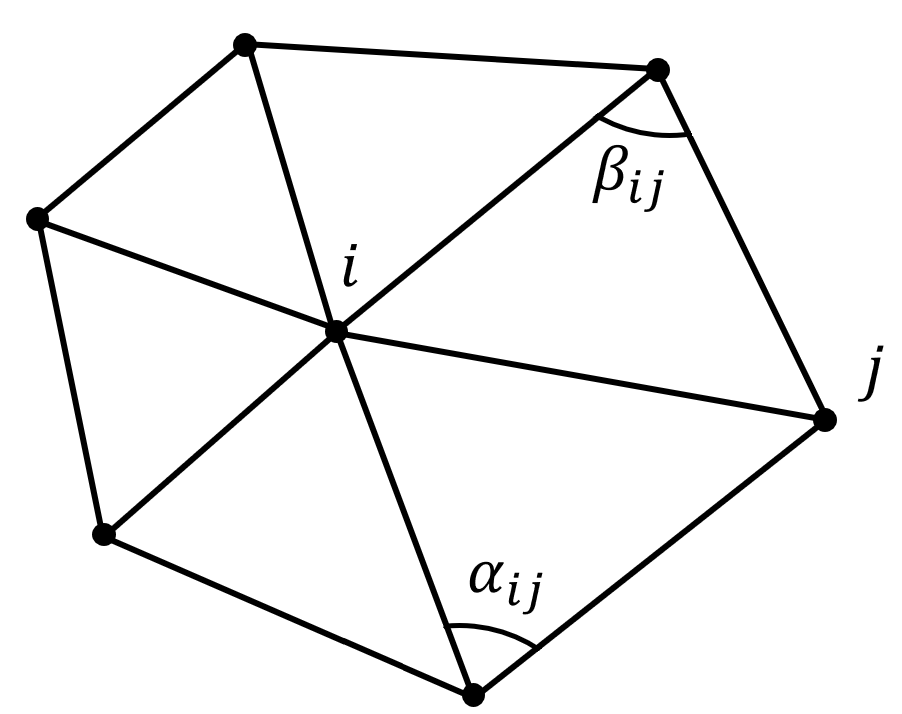}
\caption{Left: the vertex area $A(i)$ of a vertex $i$. Right: the two angles $\alpha_{ij}$ and $\beta_{ij}$ opposite to the edge $[i,j]$.}
\label{fig:laplacian}
\end{figure}
Given any simply-connected open triangular mesh, the curvature-based flattening method produces a flattened map in $\R^2$. Suppose we are given a population on each triangle element of the mesh. Define the density $\rho$ on each triangle element of the flattened map by $\frac{\text{Given population}}{\text{Area of the triangle}}$. As introduced before, after constructing the adaptive sea surrounding the map, we extend the density $\rho$ to the whole domain by setting $\rho$ at the sea to be the mean density at the original flattened map. In this subsection, we develop an iterative scheme for deforming the flattened map based on density diffusion.

To solve the diffusion equation on triangular meshes, one important issue is to discretize the Laplacian.
Let $u: \mathcal{V} \to \R$ be a function. To compute the Laplacian of $u$ at every vertex $i$, we make use of the cotangent Laplacian formulation~\cite{Pinkall93}
\begin{equation}
\Delta u(i) = \frac{1}{2A(i)} \sum_{j \in \mathcal{N}(i)} \left(\cot \alpha_{ij} + \cot \beta_{ij} \right) (u(j) - u(i)),
\end{equation}
where $A(i)$ is the vertex area of the vertex $i$, and $\alpha_{ij}$ and $\beta_{ij}$ are the two angles opposite to the edge $[i,j]$. More specifically, 
\begin{equation}
A(i) = \frac{1}{3} \sum_{T \in \{T \in \mathcal{F}: i \text{ is a vertex of the triangle }T \}} \text{Area}(T). 
\end{equation}
It is easy to observe that
\begin{equation}
\sum_{i\in \mathcal{V}} A(i) = \sum_{T \in \mathcal{F}} \text{Area}(T).
\end{equation}
This shows that the vertex area is a good discretization of the total surface area at the vertex set. The graphical illustrations of $A(i)$ and $\alpha_{ij}, \beta_{ij}$ are given in Figure \ref{fig:laplacian}.

Note that the density $\rho$ is originally defined on the triangular faces while the above $|\mathcal{V}| \times |\mathcal{V}|$ Laplacian is only applicable for vertices. To handle this discrepancy, we develop a natural transition between the value of $\rho$ on triangular faces and that on vertices. Let $\rho^{\mathcal{F}}$ and $\rho^{\mathcal{V}}$ be respectively the value of $\rho$ on faces and that on vertices in the form of column vectors. Given $\rho^{\mathcal{V}}$ on the vertices, the discretization $\rho^{\mathcal{F}}$ on the triangular faces can be obtained by considering
\begin{equation}
\rho^{\mathcal{F}} = M^{\mathcal{V}\mathcal{F}} \rho^{\mathcal{V}},
\end{equation}
where $M^{\mathcal{V}\mathcal{F}}$ is a $|\mathcal{F}| \times |\mathcal{V}|$ transition matrix defined by
\begin{equation}
M^{\mathcal{V}\mathcal{F}}_{ij} := \left\{
\begin{array}{ll}
\frac{1}{3} & \text{ if the $i$-th triangle contains the $j$-th vertex,}\\
0 & \text{ otherwise}.
\end{array}\right.
\end{equation}
Similarly, given $\rho^{\mathcal{F}}$ on the triangular faces, the discretization $\rho^{\mathcal{V}}$ on the vertices can be obtained by
\begin{equation}
\rho^{\mathcal{V}} = M^{\mathcal{F}\mathcal{V}} \rho^{\mathcal{F}},
\end{equation}
where $M^{\mathcal{F}\mathcal{V}}$ is a $|\mathcal{V}| \times |\mathcal{F}|$ transition matrix. This time, we denote
\begin{equation}
\widetilde{M}^{\mathcal{F}\mathcal{V}}_{ij} := \left\{
\begin{array}{ll}
1 & \text{ if the $j$-th triangle contains the $i$-th vertex,}\\
0 & \text{ otherwise,}
\end{array}\right.
\end{equation}
and define
\begin{equation}
M^{\mathcal{F}\mathcal{V}} := \begin{pmatrix}
\widetilde{M}^{\mathcal{F}\mathcal{V}}_{1,:}/\|\widetilde{M}^{\mathcal{F}\mathcal{V}}_{1,:}\|_0 \\
\widetilde{M}^{\mathcal{F}\mathcal{V}}_{2,:}/\|\widetilde{M}^{\mathcal{F}\mathcal{V}}_{2,:}\|_0 \\
\vdots\\
\widetilde{M}^{\mathcal{F}\mathcal{V}}_{|\mathcal{F}|,:}/\|\widetilde{M}^{\mathcal{F}\mathcal{V}}_{|\mathcal{F}|,:}\|_0 \\
\end{pmatrix}.
\end{equation}
It is easy to check that
\begin{equation}
{M}^{\mathcal{F}\mathcal{V}}{M}^{\mathcal{V}\mathcal{F}} = |\mathcal{F}|,
\end{equation}
and
\begin{equation}
{M}^{\mathcal{V}\mathcal{F}}{M}^{\mathcal{F}\mathcal{V}} = |\mathcal{V}|.
\end{equation}
Therefore, the two operators ${M}^{\mathcal{F}\mathcal{V}}$ and ${M}^{\mathcal{V}\mathcal{F}}$ are consistent with each other.

After discretizing the Laplacian, we propose the following semi-discrete backward Euler method for solving the diffusion equation \eqref{eqt:diffusion_continuous}:
\begin{equation}
\frac{\rho^{\mathcal{V}}_{n} - \rho^{\mathcal{V}}_{n-1}}{\delta t} = \Delta_{n-1} \rho^{\mathcal{V}}_{n}.  
\end{equation}
Here $\rho^{\mathcal{V}}_n$ is the value of $\rho$ on the vertices at the $n$-th iteration, $\Delta_n$ is the cotangent Laplacian of the deformed map $\vec{r}_{n}$, and $\delta t$ is the time step for the iterations. By a simple rearrangement, the above equation is equivalent to
\begin{equation} \label{eqt:update_rho}
\rho^{\mathcal{V}}_{n} = (I - \delta t \Delta_{n-1})^{-1} \rho^{\mathcal{V}}_{n-1}.
\end{equation}
Note that the above semi-discrete backward Euler method is unconditionally stable and ensures the convergence of our algorithm. Also, the cotangent Laplacian $\Delta_n$ is a symmetric positive definite matrix and hence \eqref{eqt:update_rho} can be efficiently solved by numerical solvers.

After discretizing the diffusion equation, we consider the production of the induced vector field. We first need to discrete the gradient operator $\nabla$. Consider the face-based discretization $(\nabla \rho)_n^{\mathcal{F}}(T)$ defined on every triangle element $T = [i,j,k]$ at the $n$-th iteration. Note that $(\nabla \rho)_n^{\mathcal{F}}(T)$ should satisfy
\begin{equation} \label{eqt:gradient_property}
\left\{\begin{array}{l}
\langle (\nabla \rho)_n^{\mathcal{F}}(T), e_{jk} \rangle = \rho_n^{\mathcal{V}}(k) - \rho_n^{\mathcal{V}}(j),\\
\langle (\nabla \rho)_n^{\mathcal{F}}(T), e_{ki} \rangle = \rho_n^{\mathcal{V}}(i) - \rho_n^{\mathcal{V}}(k),\\
\langle (\nabla \rho)_n^{\mathcal{F}}(T), e_{ij} \rangle = \rho_n^{\mathcal{V}}(j) - \rho_n^{\mathcal{V}}(i),\\
\langle (\nabla \rho)_n^{\mathcal{F}}(T),N \rangle = 0,\\
\end{array}\right.
\end{equation}
where $e_{ij} = [i,j], e_{jk} = [j,k], e_{ki} = [k,i]$ are the three directed edges of $T$ in the form of vectors, and $N$ is a unit normal vector of $T$. Since
\begin{equation}
e_{jk} + e_{ki} + e_{ij} = 0,
\end{equation}
the third equation in \eqref{eqt:gradient_property} automatically follows from the first two equations. Note that \eqref{eqt:gradient_property} can be solved analytically with the solution
\begin{equation} \label{eqt:discrete_gradient}
(\nabla\rho)_n^{\mathcal{F}}(T) = -\frac{1}{2\text{Area}(T)} N \times \left(\rho_n^{\mathcal{V}}(i) e_{jk} + \rho_n^{\mathcal{V}}(j) e_{ki} + \rho_n^{\mathcal{V}}(k) e_{ij}\right).
\end{equation}
This gives us an accurate approximation of the gradient operator on triangulated surfaces.

Note that the above approximation is developed on the triangle elements but not on the vertices. For the face-to-vertex conversion, we again make use of a matrix multiplication. Note that the previously developed matrices ${M}^{\mathcal{F}\mathcal{V}}$ and ${M}^{\mathcal{V}\mathcal{F}}$ are purely combinatorial as the directions are not important in their uses. By contrast, the directions are important for computing the gradient operator. Therefore, we need to take the geometry of the mesh into account in designing the conversion matrix for the gradient operator. To emphasize the weight of different directions in the conversion, we use the triangle area as a weight function. More specifically, we denote 
\begin{equation}
\widetilde{W}^{\mathcal{F}\mathcal{V}}_{ij} = \left\{
\begin{array}{ll}
\text{Area}(T_j) & \text{ if the $j$-th triangle $T_j$ contains the $i$-th vertex,}\\
0 & \text{ otherwise.}
\end{array}\right.
\end{equation}
Then we can define
\begin{equation}
W^{\mathcal{F}\mathcal{V}} := \begin{pmatrix}
\widetilde{W}^{\mathcal{F}\mathcal{V}}_{1,:}/\|\widetilde{W}^{\mathcal{F}\mathcal{V}}_{1,:}\|_1 \\
\widetilde{W}^{\mathcal{F}\mathcal{V}}_{2,:}/\|\widetilde{W}^{\mathcal{W}\mathcal{V}}_{2,:}\|_1 \\
\vdots\\

\widetilde{W}^{\mathcal{F}\mathcal{V}}_{|\mathcal{F}|,:}/\|\widetilde{W}^{\mathcal{F}\mathcal{V}}_{|\mathcal{F}|,:}\|_1 \\
\end{pmatrix}.
\end{equation}
With this weighted face-to-vertex conversion matrix, we have
\begin{equation}
(\nabla\rho)^{\mathcal{V}}_{n} := W^{\mathcal{F}\mathcal{V}}(\nabla\rho)^{\mathcal{F}}_{n}.
\end{equation}
With all differential operators discretized, we are now ready to introduce our iterative scheme for computing density-equalizing maps. In each iteration, we update the density by solving \eqref{eqt:update_rho} and compute the induced gradient $(\nabla\rho)_n^{\mathcal{V}}$ based on the abovementioned procedures. Then, we deform the map by
\begin{equation}
 \vec{r}_{n} = \vec{r}_{n-1} + \delta t (\nabla\rho)_{n}^{\mathcal{V}}.
\end{equation}

For the stopping criterion, we consider the quantity $\frac{\text{sd}(\rho^{\mathcal{F}}_n)}{\text{mean}(\rho^{\mathcal{F}}_n)}$. Note that the standard deviation $\text{sd}(\rho^{\mathcal{F}}_n)$ measures the dispersion of the updated density $\rho^{\mathcal{F}}_n$, and we normalize it using ${\text{mean}(\rho^{\mathcal{F}}_n)}$ to remove the effect of arbitrary scaling of $\rho^{\mathcal{F}}_n$. Also, it is easy to note that $\frac{\text{sd}(\rho^{\mathcal{F}}_n)}{\text{mean}(\rho^{\mathcal{F}}_n)} = 0$ if and only if the density is completely equalized. Hence, this normalized quantity can be used for determining the convergence of the iterative algorithm. Finally, we rescale the mapping result so that the total area of $S$ is preserved under our density-equalizing mapping algorithm. 

We remark that the step size $\delta t$ affects the convergence rate of the algorithm. By dimensional analysis on the diffusion equation \eqref{eqt:diffusion_continuous}, an appropriate dimension of $\delta t$ would be $L^2$. Also, note that $\delta t$ should be independent to the magnitude of $\rho$. Therefore, a reasonable choice of $\delta t$ is
\begin{equation}
\delta t = \text{min}\left\{\frac{\text{min}(\rho^{\mathcal{F}}_0)}{\text{mean}(\rho^{\mathcal{F}}_0)}, \frac{\text{mean}(\rho^{\mathcal{F}}_0)}{\text{max}(\rho^{\mathcal{F}}_0)}\right\} \times \text{Area}(S).
\end{equation}
The first term is a dimensionless quantity that takes extreme relative density ratios into account, and the second term is a natural quantity with dimension $L^2$. Algorithm \ref{alg:main} summarizes our proposed method for producing density-equalizing maps of simply-connected open surfaces.

\begin{algorithm}[h]
\label{alg:main}
\KwIn{A simply-connected open triangulated surface $S$, a population on each triangle, and a stopping parameter $\epsilon$.}
\KwOut{A density-equalizing flattening map $f:S \to \R^2$.}
\BlankLine
\uIf{$S$ is planar}{
    Set $\vec{r}_0 = S$ \;
  }
\Else{
    Compute a curvature-based flattening map $\phi: S \to \mathbb{C}$ using Algorithm \ref{alg:tutte_curvature} or Algorithm \ref{alg:chi_curvature}. Denote $\vec{r}_0 = \phi(S)$\;
}
Define the density $\rho^{\mathcal{F}}_0 = \frac{\text{Given population}}{\text{Area of the triangle}}$ on each triangle of $\vec{r}_0$\;
Update $\vec{r}_0$ with an adaptive sea constructed using Algorithm \ref{alg:sea}\;
Extend $\rho^{\mathcal{F}}_0$ to the whole domain by setting $\rho^{\mathcal{F}}_0$ at the sea to be the mean of the original $\rho^{\mathcal{F}}_0$\;
Compute $\rho^{\mathcal{V}}_0 = {M}^{\mathcal{F}\mathcal{V}} \rho^{\mathcal{F}}_0$\;
Set $\delta t = \text{min}\left\{\frac{\text{min}(\rho^{\mathcal{F}}_0)}{\text{mean}(\rho^{\mathcal{F}}_0)}, \frac{\text{mean}(\rho^{\mathcal{F}}_0)}{\text{max}(\rho^{\mathcal{F}}_0)}\right\} \times \text{Area}(S)$\;
Set $n = 0$\;
\Repeat{$\frac{\text{sd}(\rho^{\mathcal{F}}_n)}{\text{mean}(\rho^{\mathcal{F}}_n)} < \epsilon$}{ 
Update $n = n + 1$\;
Solve $\rho^{\mathcal{V}}_{n} = (I - \delta t \Delta_{n-1})^{-1} \rho^{\mathcal{V}}_{n-1}$\;
Compute the face-based discrete gradient $(\nabla\rho)_n^{\mathcal{F}}(T) = -\frac{1}{2\text{Area}(T)} N \times \left(\rho_n^{\mathcal{V}}(i) e_{jk} + \rho_n^{\mathcal{V}}(j) e_{ki} + \rho_n^{\mathcal{V}}(k) e_{ij}\right)$\;
Perform the conversion $(\nabla\rho)_n^{\mathcal{V}} = W^{\mathcal{F}\mathcal{V}}(\nabla\rho)_n^{\mathcal{F}}$\;
Update $\vec{r}_{n} = \vec{r}_{n-1} + \delta t (\nabla\rho)_n^{\mathcal{V}}$\;
Compute $\rho^{\mathcal{F}}_{n} = {M}^{\mathcal{V}\mathcal{F}} \rho^{\mathcal{V}}_{n}$\;
}
Obtain $f(S) = \vec{r}_{n} \times \frac{\text{Area}(\vec{r}_{0})}{\text{Area}(\vec{r}_{n})}$\;
\caption{Density-equalizing map for simply-connected open surfaces}
\end{algorithm}

\subsection{The choice of population and its effects}
Before ending this section, we discuss the choice of the initial population and its effect on the final result obtained by our algorithm. Some choices and the corresponding effects are listed below:
\begin{enumerate}[(i)]
 \item If we set a relatively high population at a certain region of the input surface, the population will cause an expansion during the density-equalization. The region will be magnified in the final density-equalizing mapping result. 
 \item Similarly, if we set a relatively low population at a certain region of the input surface, the region will shrink in the final density-equalizing mapping result.
 \item If we set the population to be the area of every triangle element of the input surface, the resulting density-equalizing map will be an area-preserving planar parameterization of the input surface as we have
 \begin{equation}
  \frac{\text{Initial area}}{\text{Final area}} = \frac{\text{Given population}}{\text{Final area}} = \text{Density} = \text{Constant}.
 \end{equation}
\end{enumerate}
Examples are given in Section \ref{sect:experiment} to illustrate the effect of different input populations.

\section{Experimental results}\label{sect:experiment}
In this section, we demonstrate the effectiveness of our proposed algorithm using various experiments. Our algorithms are implemented in MATLAB. The linear systems in our algorithm are solved using the backslash operator in MATLAB. All experiments are performed on a PC with Intel i7-6700K CPU and 16~GB RAM. All surfaces are discretized in the form of triangular meshes. In all experiments, the stopping parameter $\epsilon$ is set to be $10^{-3}$. Some of the surface meshes are adapted from the AIM@SHAPE Shape Repository~\cite{aimatshape}. 

\begin{figure}[t!]
\centering
\includegraphics[width=0.48\textwidth]{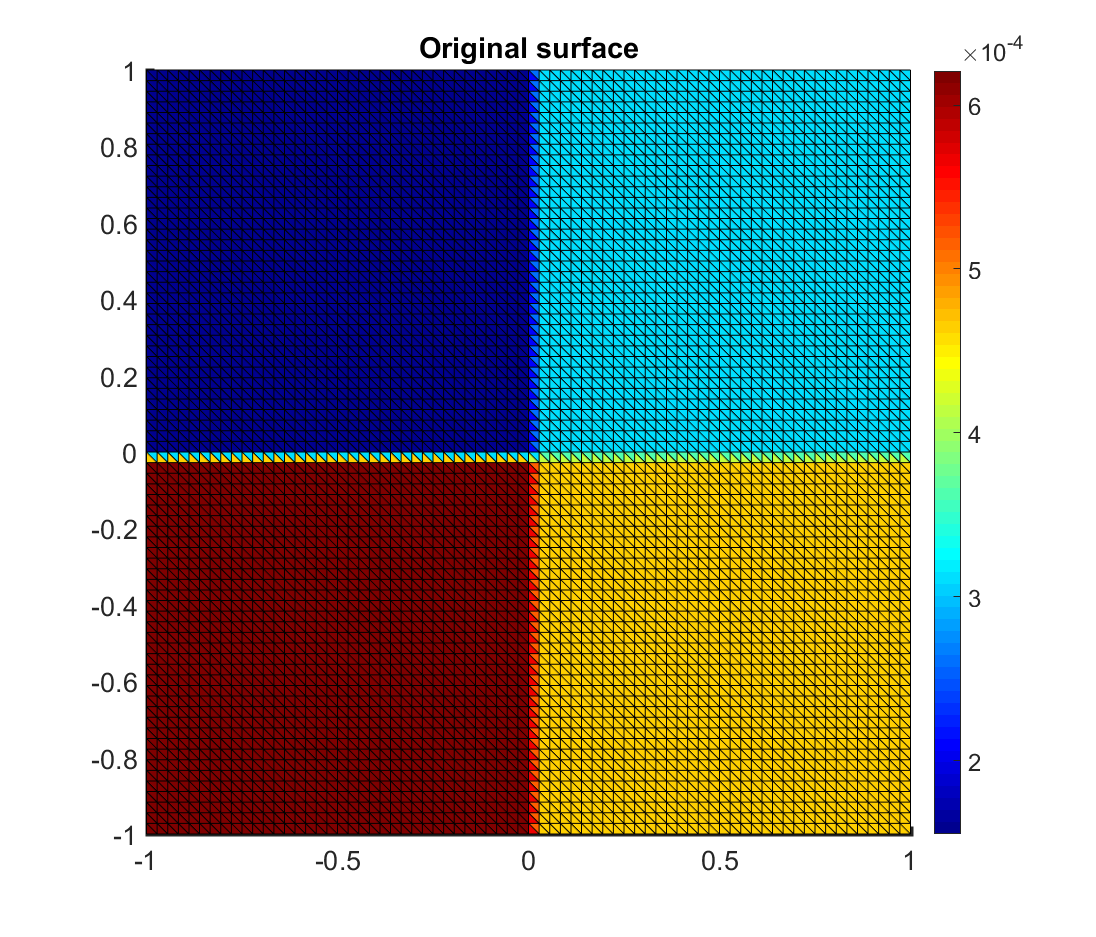}
\includegraphics[width=0.48\textwidth]{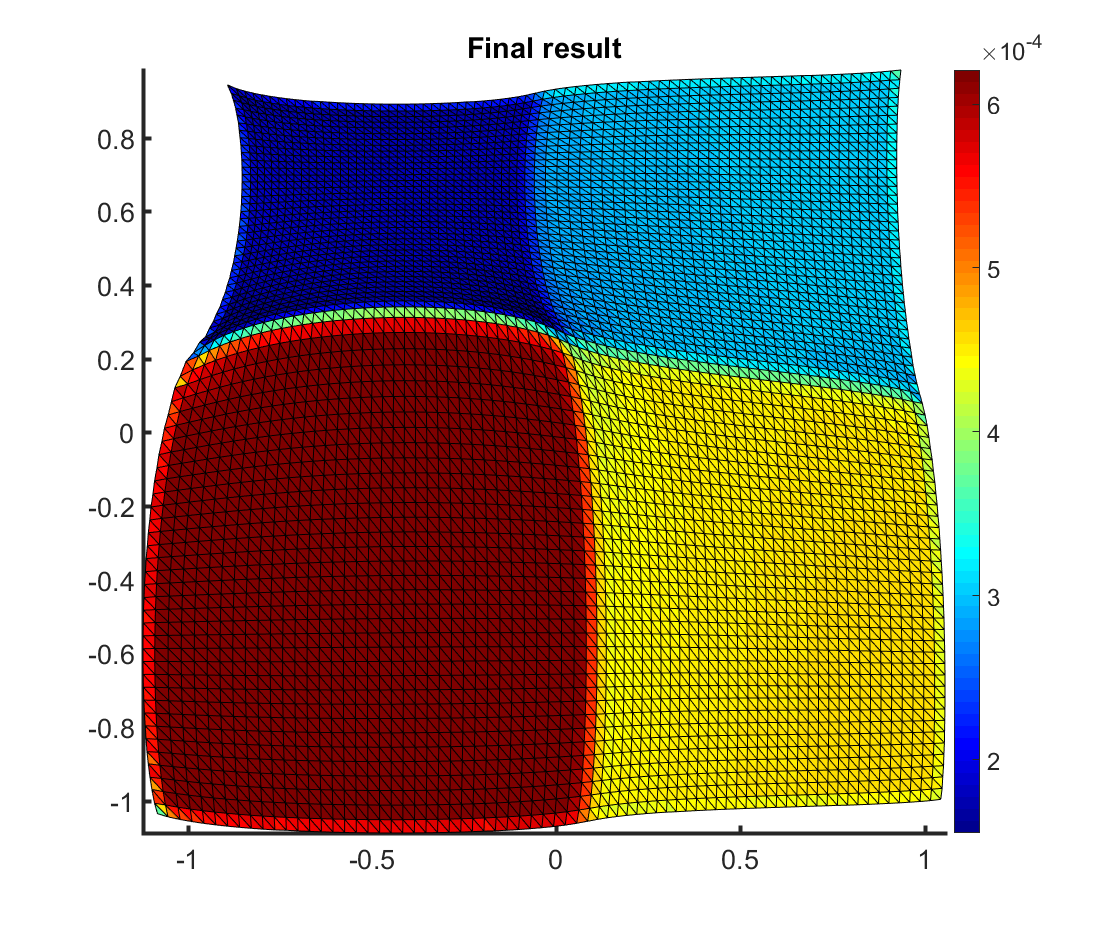}
\includegraphics[width=0.32\textwidth]{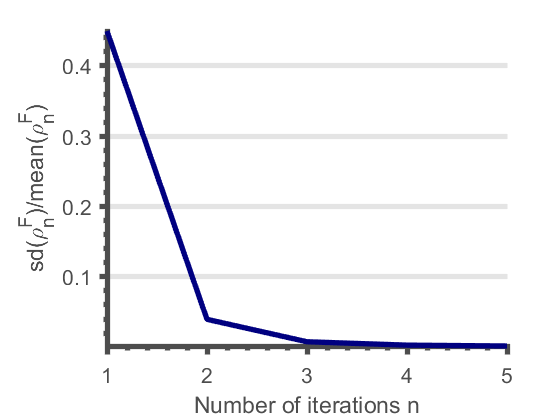}
\includegraphics[width=0.32\textwidth]{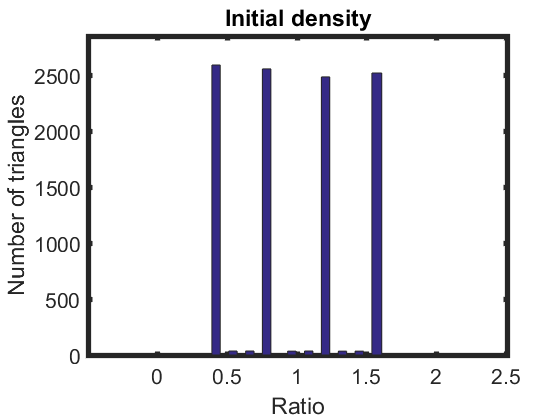}
\includegraphics[width=0.32\textwidth]{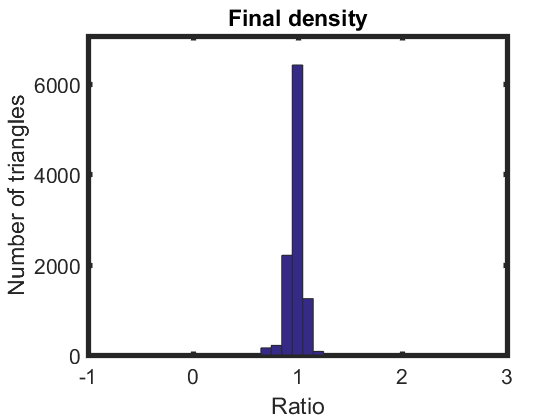}
\caption{Density-equalization on a square. Top left: the initial shape colored with a given population distribution. Top right: the density-equalizing map colored with the final area of each triangle element. Bottom left: the values of $\frac{\text{sd}(\rho^{\mathcal{F}}_n)}{\text{mean}(\rho^{\mathcal{F}}_n)}$. Bottom middle: the histogram of the initial density $\frac{\text{Given population}}{\text{Initial area}}$ on each triangle element. Bottom right: the histogram of the final density $\frac{\text{Given population}}{\text{Final area}}$ on each triangle element. }
\label{fig:square}
\end{figure}

\begin{figure}[t!]
\centering
\includegraphics[width=0.48\textwidth]{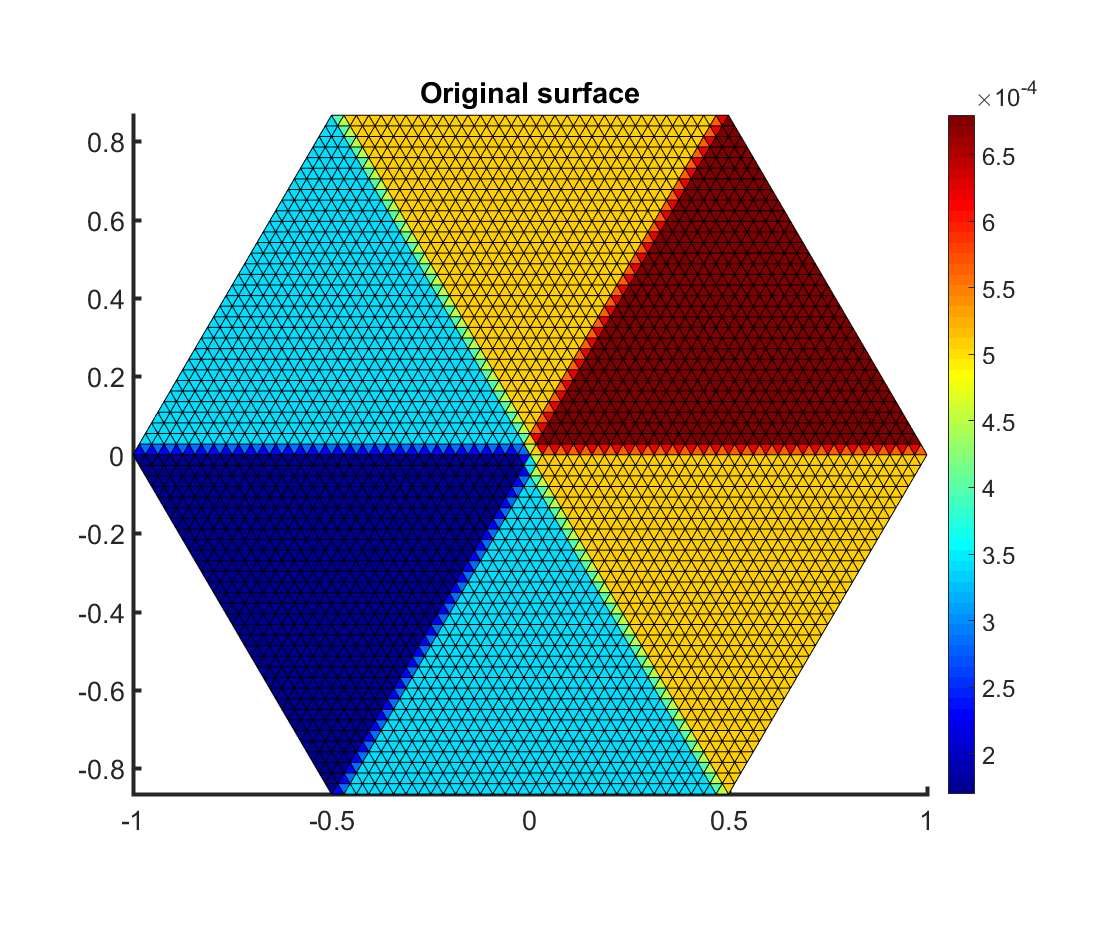}
\includegraphics[width=0.48\textwidth]{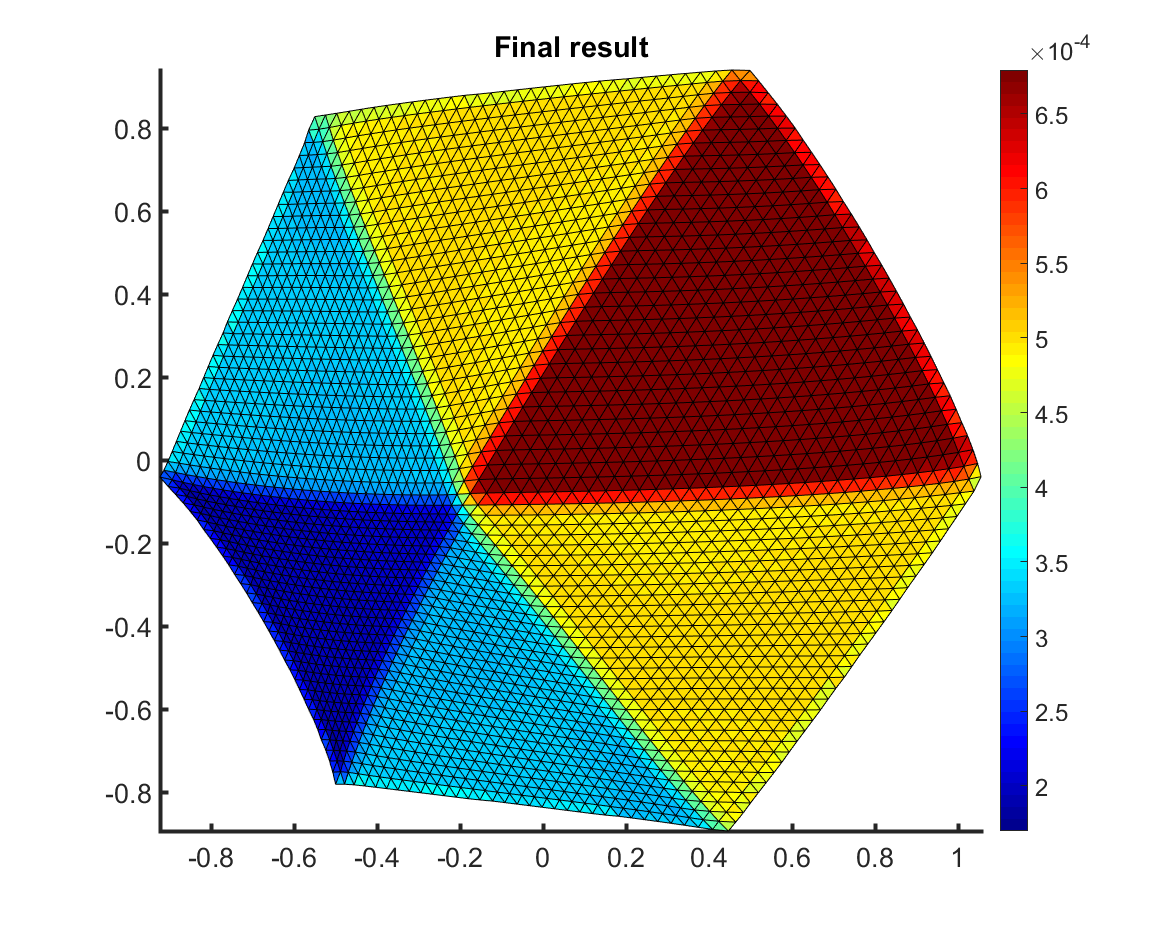}
\includegraphics[width=0.32\textwidth]{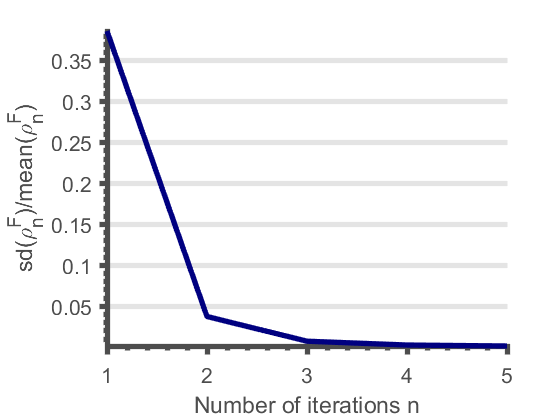}
\includegraphics[width=0.32\textwidth]{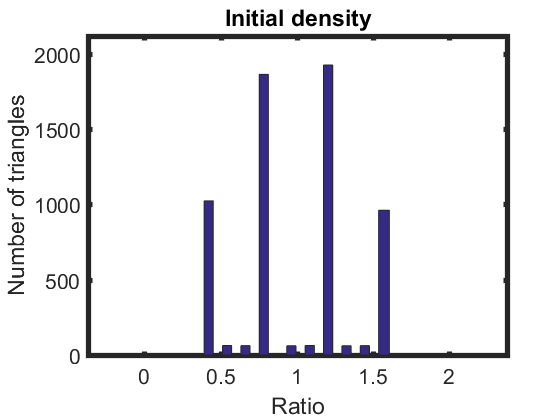}
\includegraphics[width=0.32\textwidth]{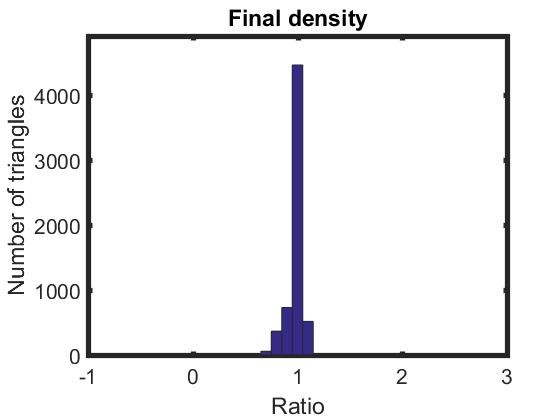}
\caption{Density-equalization on a hexagon. Top left: the initial shape colored with a given population distribution. Top right: the density-equalizing map colored with the final area of each triangle element.  Bottom left: the values of $\frac{\text{sd}(\rho^{\mathcal{F}}_n)}{\text{mean}(\rho^{\mathcal{F}}_n)}$. Bottom middle: the histogram of the initial density $\frac{\text{Given population}}{\text{Initial area}}$ on each triangle element. Bottom right: the histogram of the final density $\frac{\text{Given population}}{\text{Final area}}$ on each triangle element.}
\label{fig:hexagon}
\end{figure}

\subsection{Examples of density-equalizing maps produced by our algorithm}
We begin with two synthetic examples of regular polygons on $\R^2$. Figures \ref{fig:square} and \ref{fig:hexagon} respectively show a square and a hexagon with a given population on every triangle element, and the density-equalizing results obtained by our proposed algorithm. In both examples the final densities $\frac{\text{Given population}}{\text{Final area}}$ highly concentrate at $1$, meaning that the densities are well equalized. Also, the plots of the quantity $\frac{\text{sd}(\rho^{\mathcal{F}}_n)}{\text{mean}(\rho^{\mathcal{F}}_n)}$ show that iterative scheme converges rapidly.

\begin{figure}[t!]
\centering
\includegraphics[width=0.325\textwidth]{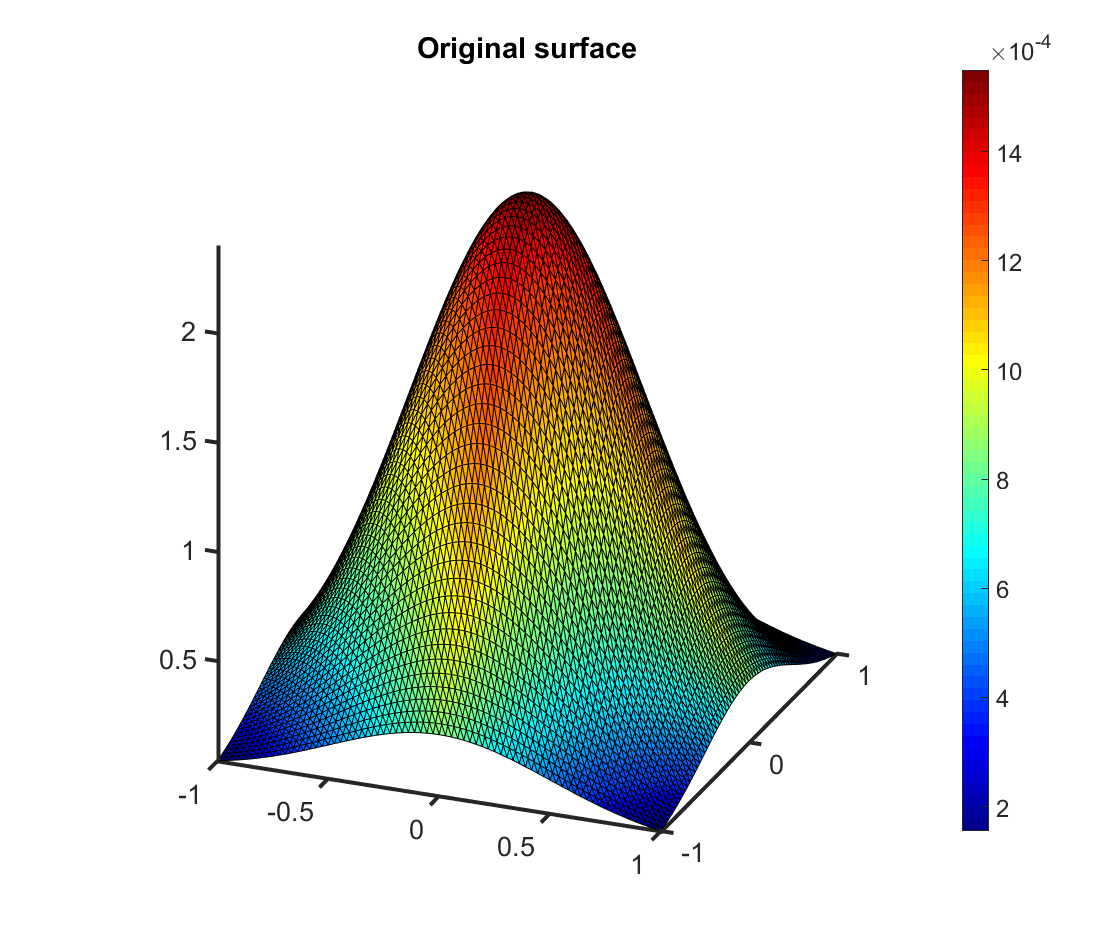}
\includegraphics[width=0.325\textwidth]{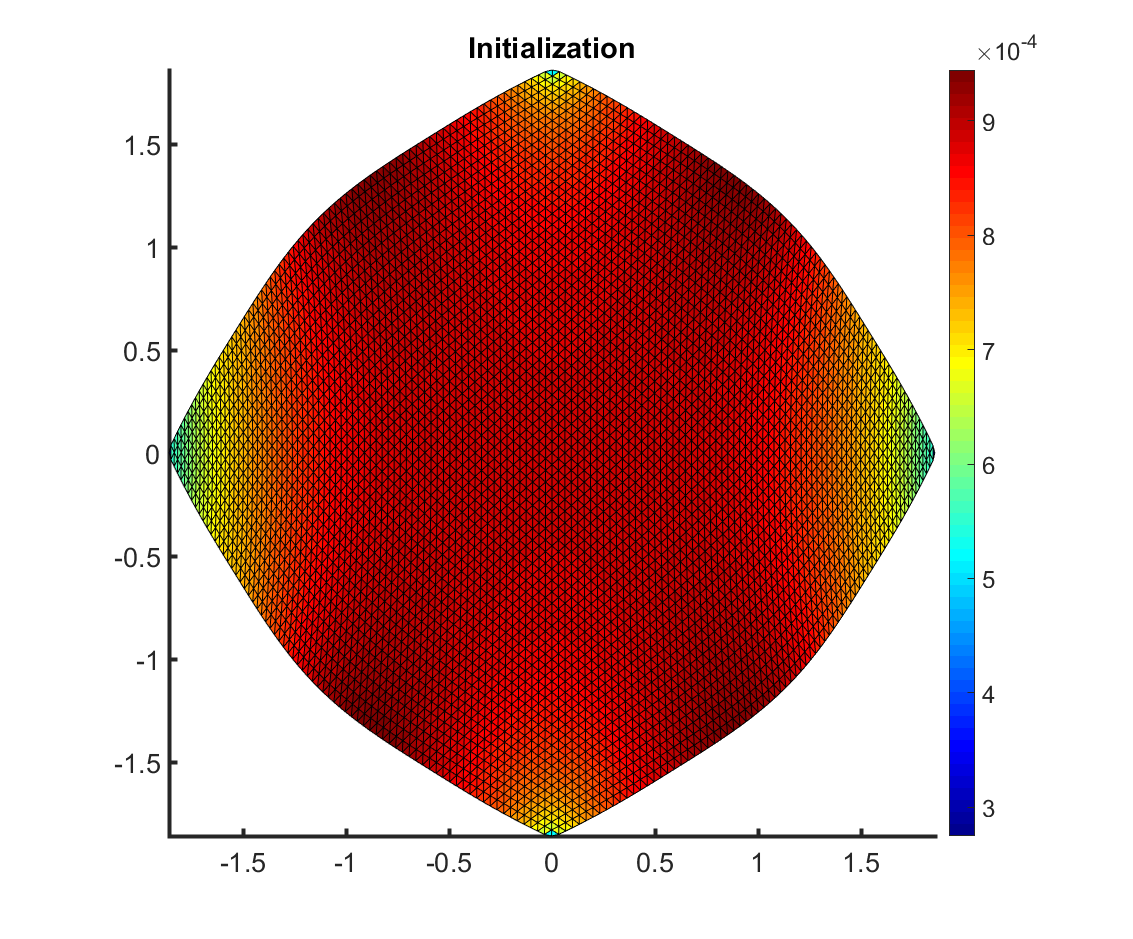}
\includegraphics[width=0.32\textwidth]{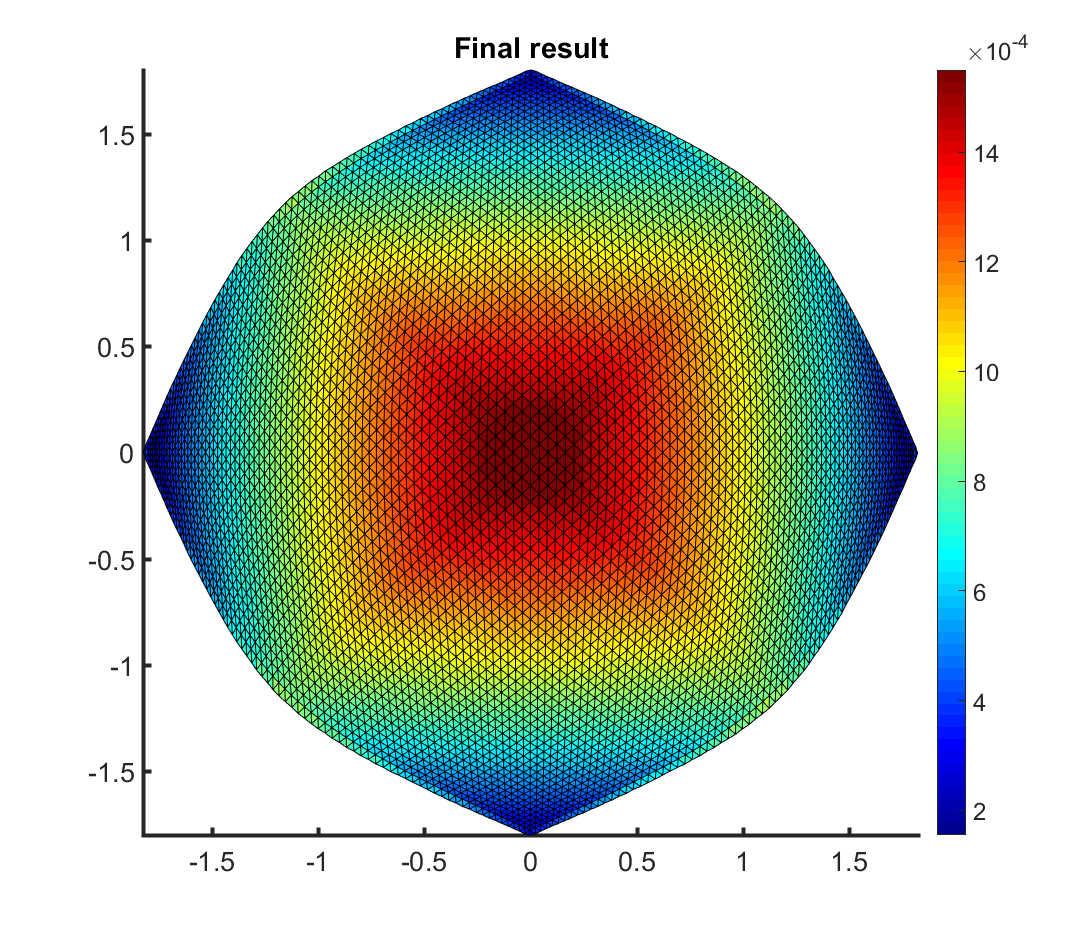}
\includegraphics[width=0.32\textwidth]{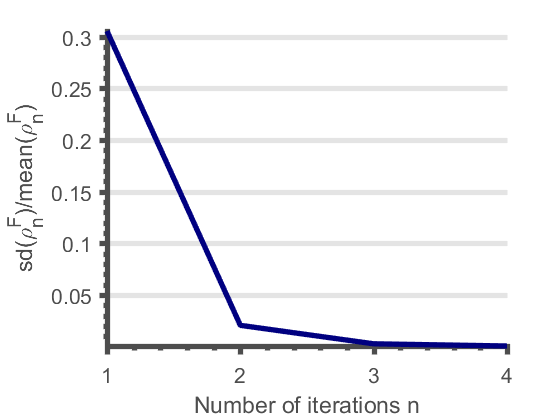}
\includegraphics[width=0.32\textwidth]{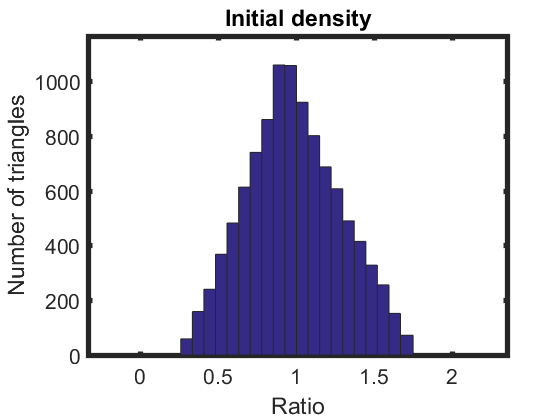}
\includegraphics[width=0.32\textwidth]{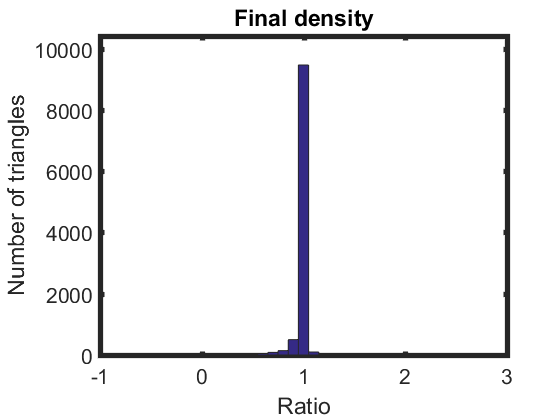}
\caption{Density-equalizing map for a surface in $\R^3$ with Gaussian shape. Top left: the initial shape colored with a given population distribution. Top middle: the curvature-based Tutte flattening initialization colored with the area of each flattened triangle element. Top right: the final density-equalizing map colored with the final area of each triangle element. Bottom right: the values of $\frac{\text{sd}(\rho^{\mathcal{F}}_n)}{\text{mean}(\rho^{\mathcal{F}}_n)}$. Bottom middle: the histogram of the density $\frac{\text{Given population}}{\text{Initial flattened area}}$ on each flattened triangle element after the Tutte flattening initialization. Bottom right: the histogram of the density $\frac{\text{Given population}}{\text{Final area}}$ on each triangle element of the final result.}
\label{fig:gaussian}
\end{figure}

We then consider a synthetic example of a surface in $\R^3$ with Gaussian shape. The domain of the shape is $[0,1] \times [0,1]$ and the population is set to be $2.2 - |x| - |y|$, where $(x,y)$ are the $x$- and $y$-coordinates of the centroid of each triangle element. Algorithm \ref{alg:tutte_curvature} is used for the initialization of the density-equalization algorithm. Figure \ref{fig:gaussian} shows the initial surface and the mapping result obtained by our density-equalizing mapping algorithm. The plots indicate that the density is well equalized by our algorithm.

\begin{figure}[t!]
\centering
\includegraphics[width=0.32\textwidth]{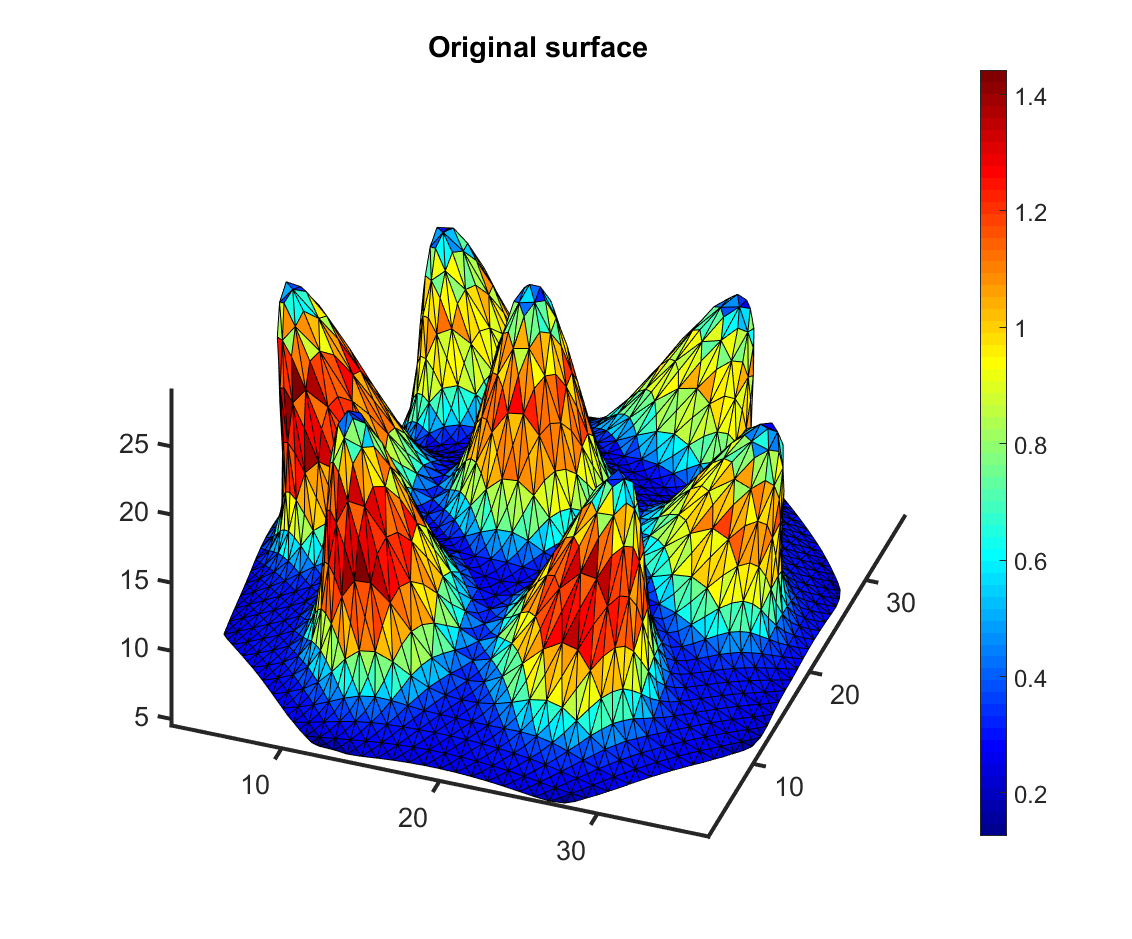}
\includegraphics[width=0.32\textwidth]{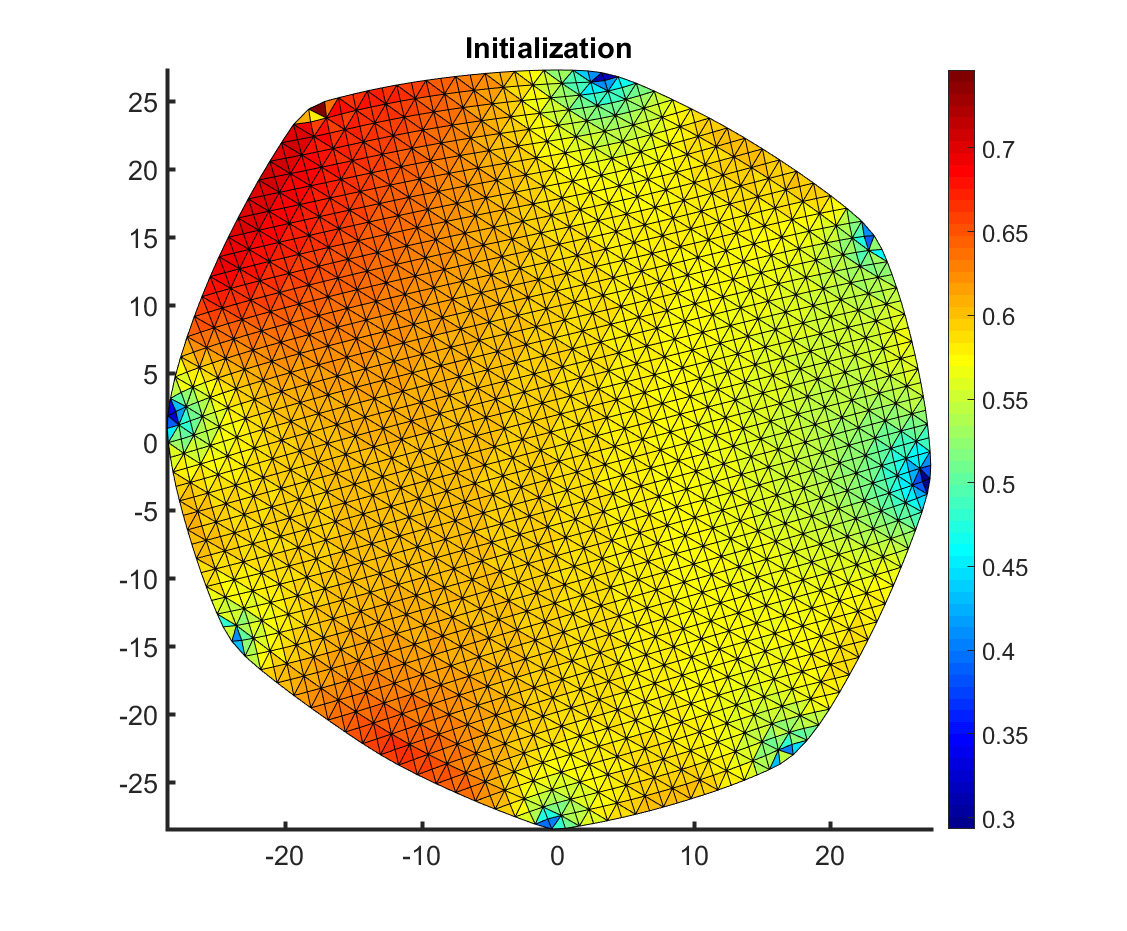}
\includegraphics[width=0.32\textwidth]{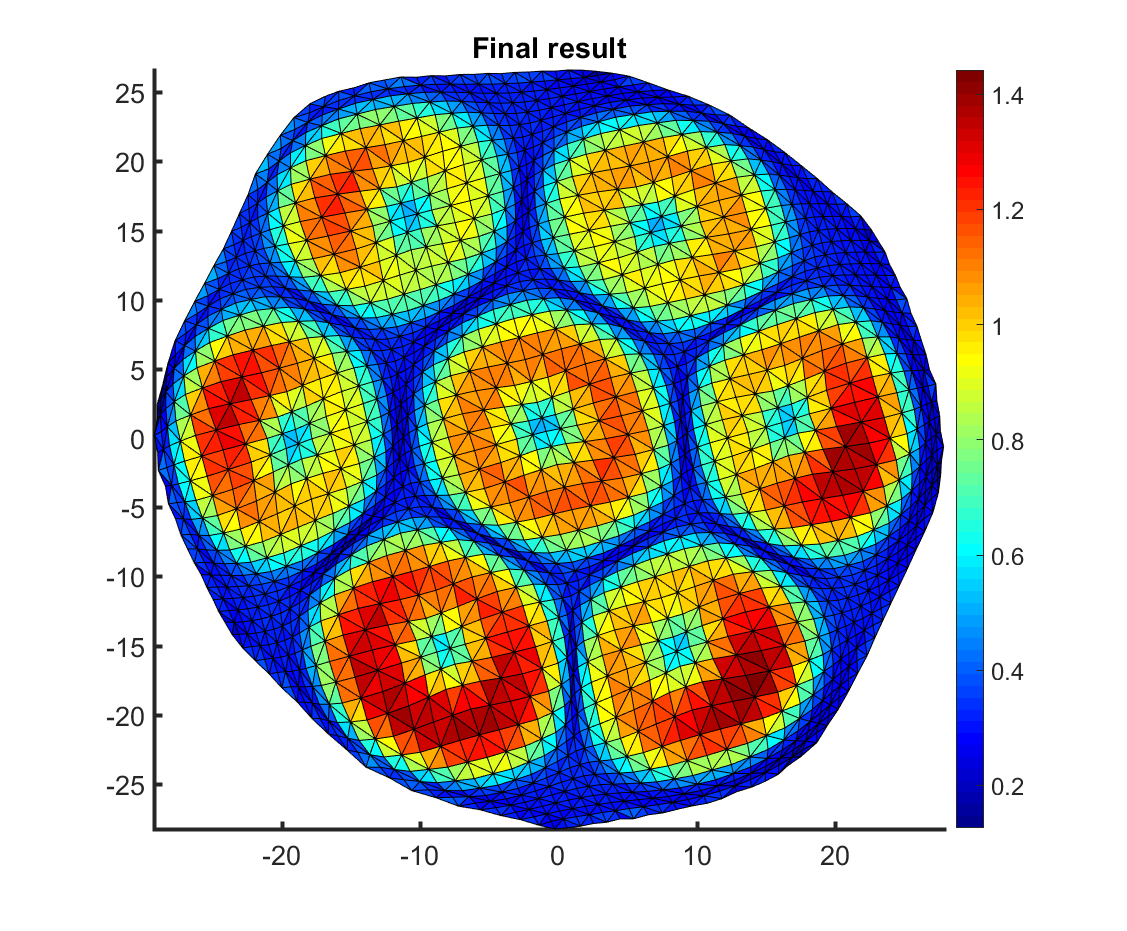}
\includegraphics[width=0.32\textwidth]{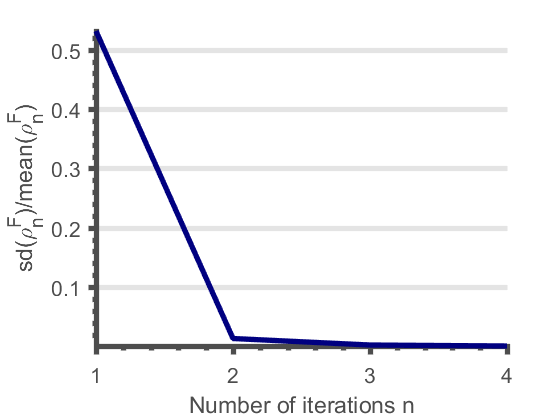}
\includegraphics[width=0.32\textwidth]{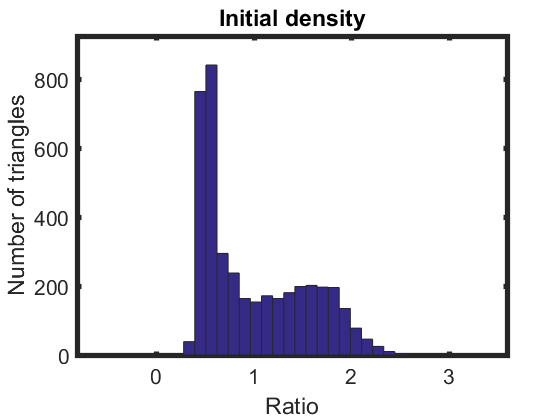}
\includegraphics[width=0.32\textwidth]{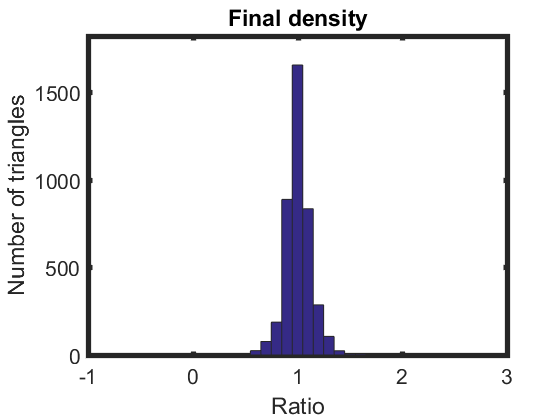}
\caption{Area-preserving parameterization of a surface with multiple peaks in $\R^3$. Top left: the initial shape colored with the initial area of each triangle element. Top middle: the curvature-based Tutte flattening initialization colored with the area of each flattened triangle element. Top right: the final density-equalizing map colored with the final area of each triangle element.  Bottom right: the values of $\frac{\text{sd}(\rho^{\mathcal{F}}_n)}{\text{mean}(\rho^{\mathcal{F}}_n)}$. Bottom middle: the histogram of the density $\frac{\text{Initial area}}{\text{Initial flattened area}}$ on each flattened triangle element after the Tutte flattening initialization. Bottom right: the histogram of the density $\frac{\text{Initial area}}{\text{Final area}}$ on each triangle element of the final result.}
\label{fig:peaks}
\end{figure}

\begin{figure}[t!]
\centering
\includegraphics[width=0.3\textwidth]{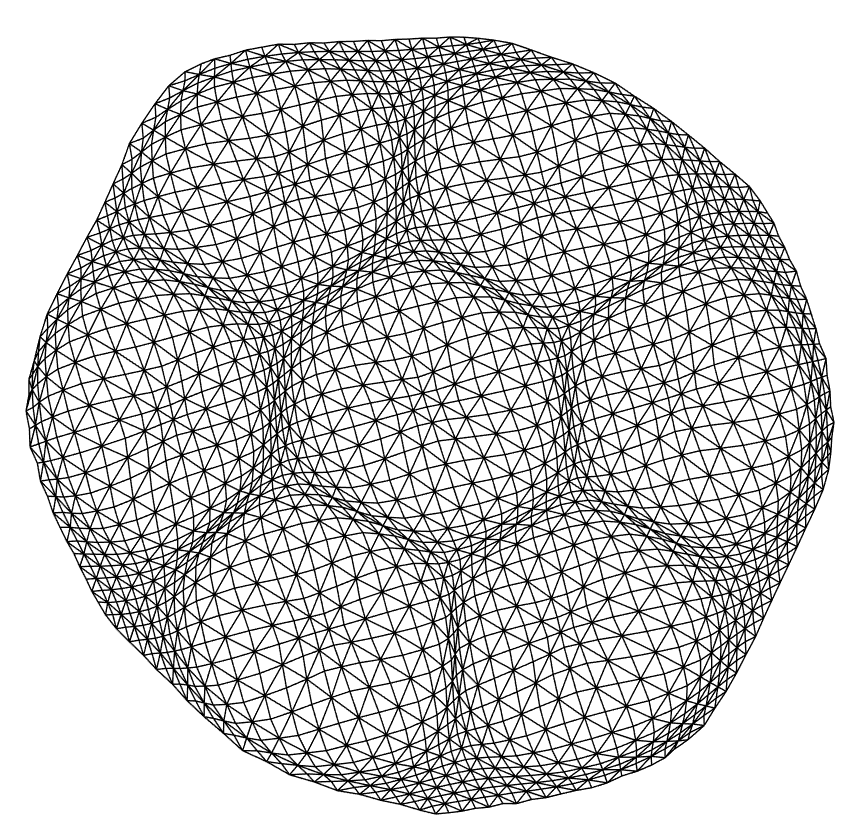}
\includegraphics[width=0.35\textwidth]{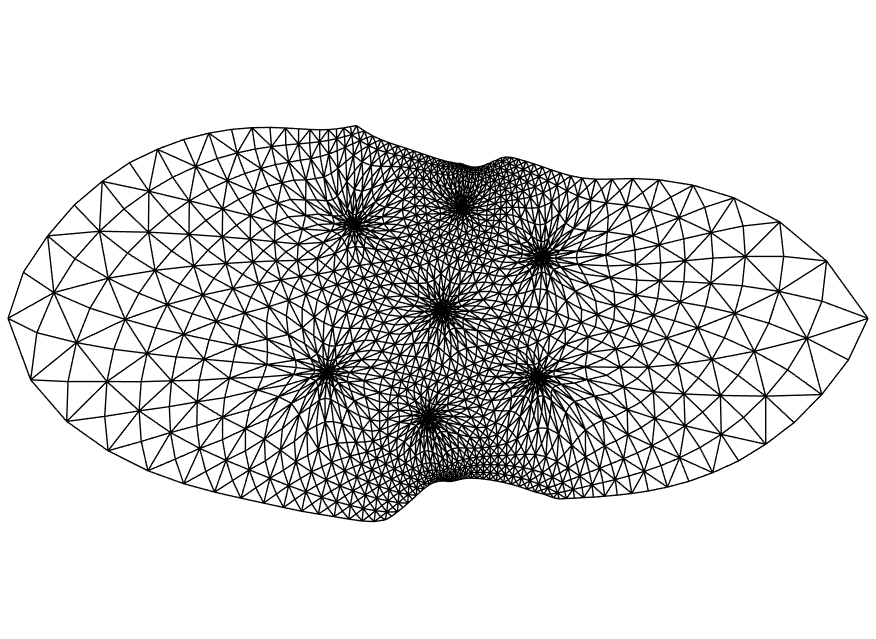}
\includegraphics[width=0.3\textwidth]{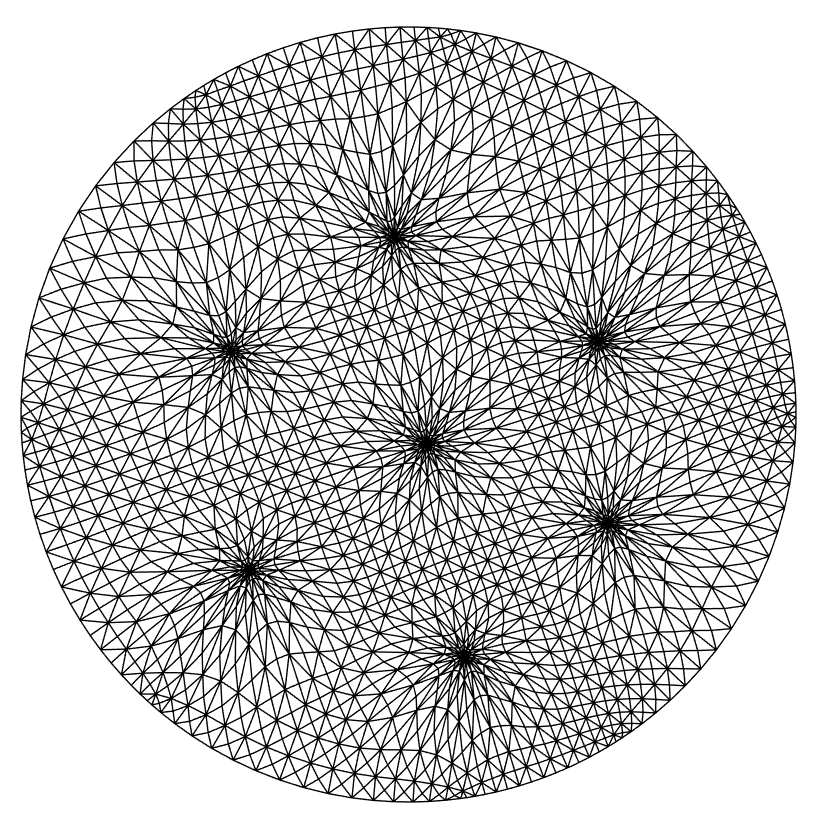}
\caption{Comparison of different parameterization schemes for a surface with multiple peaks in $\R^3$ shown in Figure \ref{fig:peaks}. Left: The area-preserving parameterization by our method. Middle: The free-boundary conformal parameterization by Desbrun et al.~\cite{Desbrun02}. Right: The disk conformal parameterization by Choi and Lui~\cite{Choi17}.}
\label{fig:peaks_comparison}
\end{figure}

We consider another synthetic example of a surface with multiple peaks in $\R^3$. This time, we set the population as the area of each triangle element on the initial surface. In other words, our proposed algorithm should result in an area-preserving flattening map. Again, Algorithm \ref{alg:tutte_curvature} is used for the initialization of the density-equalization algorithm. Figure \ref{fig:peaks} shows the initial surface and the mapping result obtained by our density-equalizing mapping algorithm. The flattening map effectively preserves the area ratios. We compare our density-equalizing mapping result with the state-of-the-art conformal parameterization algorithms~\cite{Desbrun02,Choi17}. Figure \ref{fig:peaks_comparison} shows the parameterization results, whereby the peaks are substantially shrunk for conformal parameterizations, and the boundary of the free-boundary conformal parameterization is significantly different from that of the original surface. By contrast, the peaks are flattened without being shrunk under our proposed algorithm.

\begin{figure}[t!]
\centering
\includegraphics[width=0.32\textwidth]{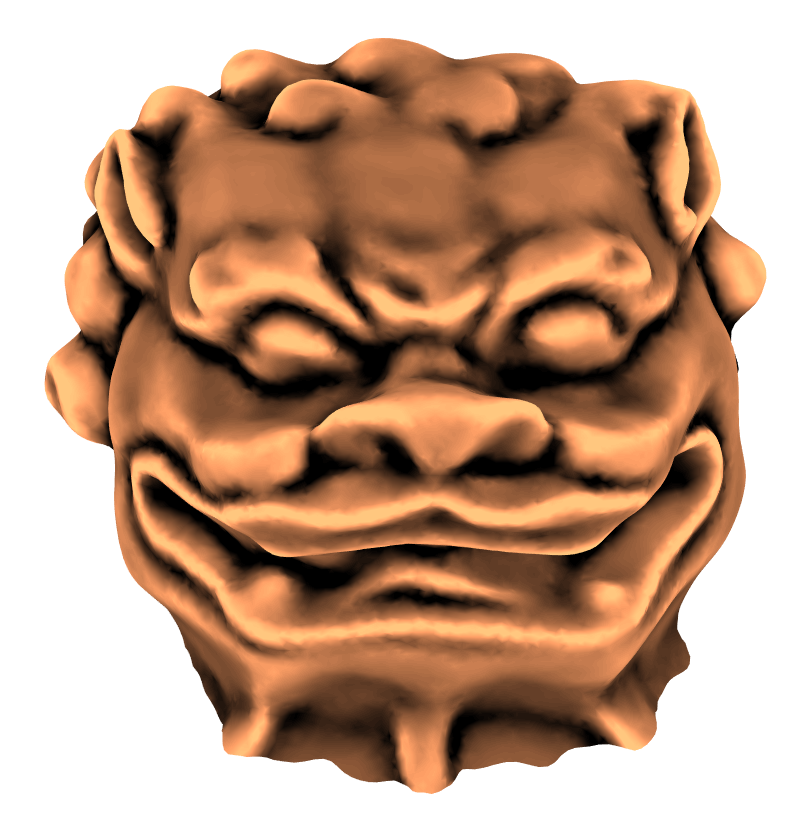}
\includegraphics[width=0.3\textwidth]{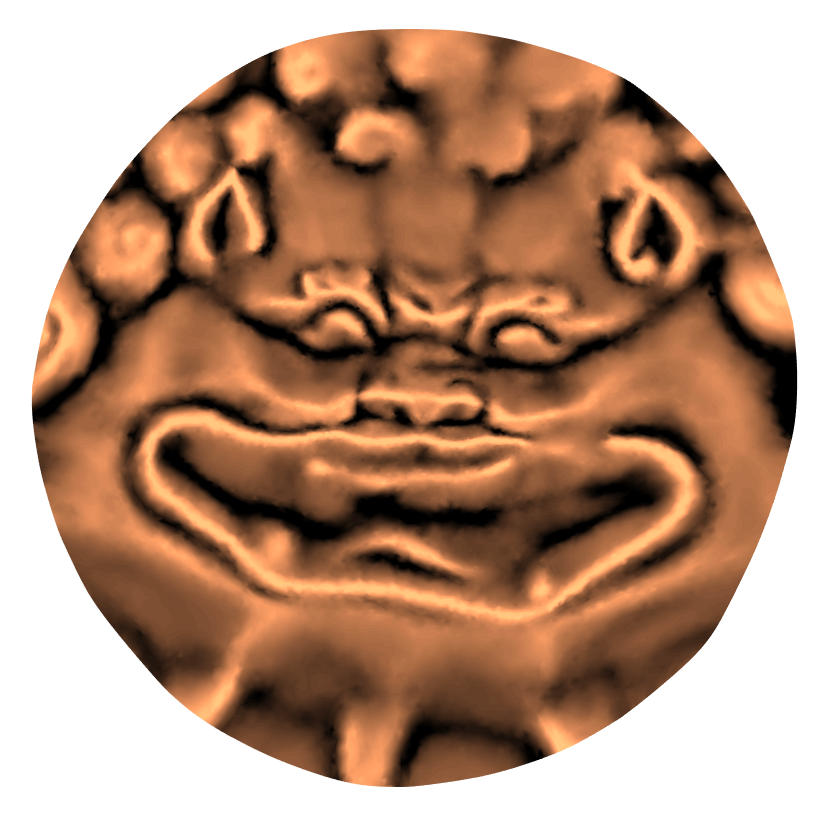}
\includegraphics[width=0.3\textwidth]{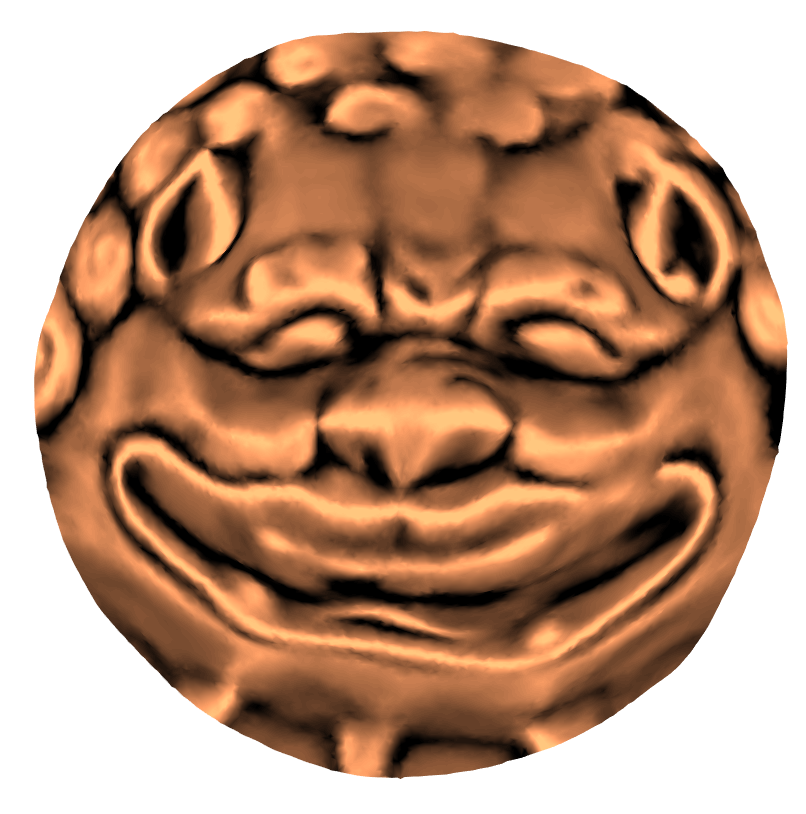}
\includegraphics[width=0.32\textwidth]{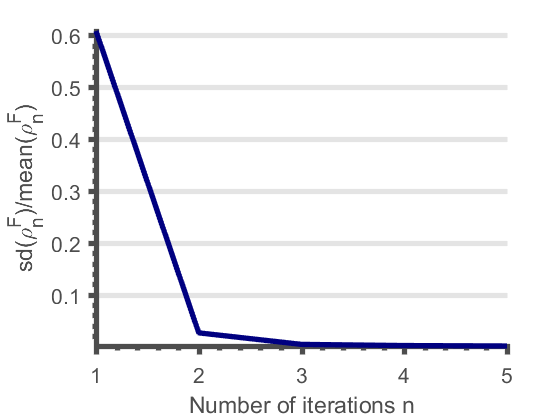}
\includegraphics[width=0.32\textwidth]{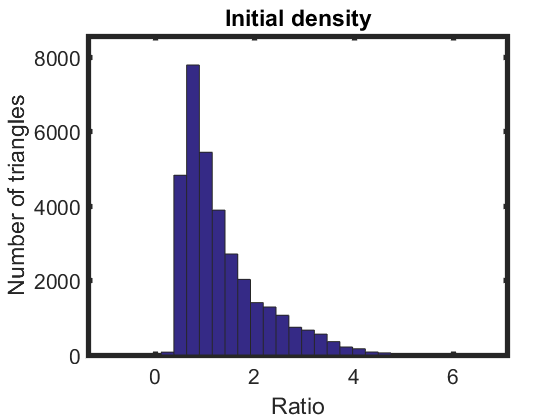}
\includegraphics[width=0.32\textwidth]{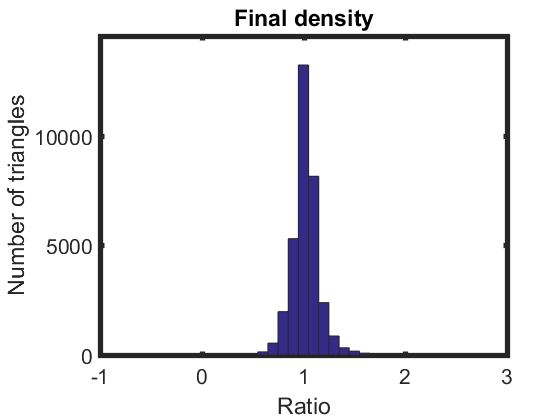}
\caption{Area-preserving parameterization of a lion face in $\R^3$. Top left: the initial shape. Top middle: the curvature-based locally authalic flattening initialization. Top right: the final density-equalizing map. Bottom left: the values of $\frac{\text{sd}(\rho^{\mathcal{F}}_n)}{\text{mean}(\rho^{\mathcal{F}}_n)}$. Bottom middle: the histogram of the density $\frac{\text{Initial area}}{\text{Initial flattened area}}$ on each flattened triangle element after the flattening initialization. Bottom right: the histogram of the density $\frac{\text{Initial area}}{\text{Final area}}$ on each triangle element of the final result. }
\label{fig:lion}
\end{figure}

Now consider computing the area-preserving mapping for a real surface mesh of a lion face in $\R^3$ using our algorithm. Again, we set the population as the area of each triangle element on the initial surface for achieving an area-preserving parameterization. Algorithm \ref{alg:chi_curvature} is used for the initialization step of our density-equalizing mapping algorithm. Figure \ref{fig:lion} shows the initial surface and the mapping result obtained by our density-equalizing mapping algorithm. For better visualization, we color the meshes with the mean curvature of the input lion face. The locally authalic initialization does not preserve the global area ratio; in particular, the nose of the lion is shrunk. By contrast, the final density-equalizing flattening map effectively preserves the area ratios.

\begin{figure}[t!]
\centering
\includegraphics[width=0.2\textwidth]{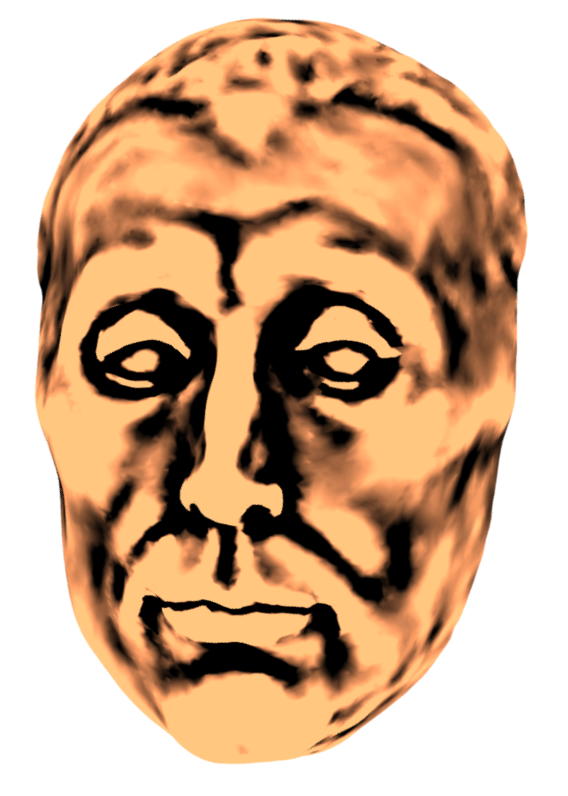}
\includegraphics[width=0.27\textwidth]{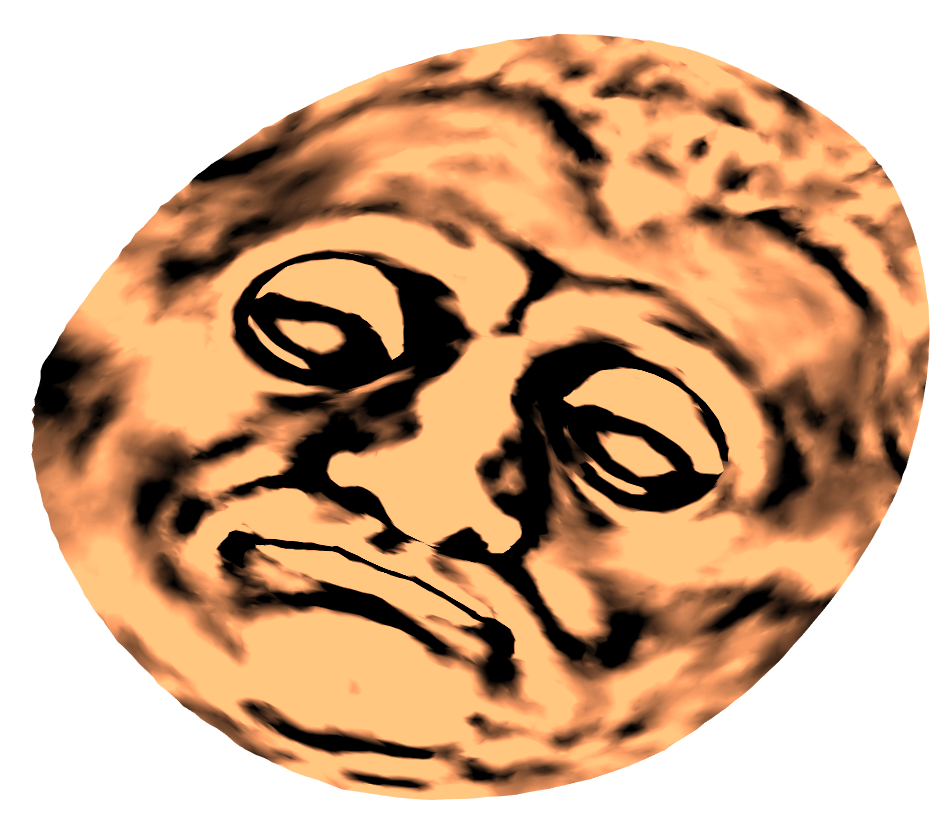}
\includegraphics[width=0.21\textwidth]{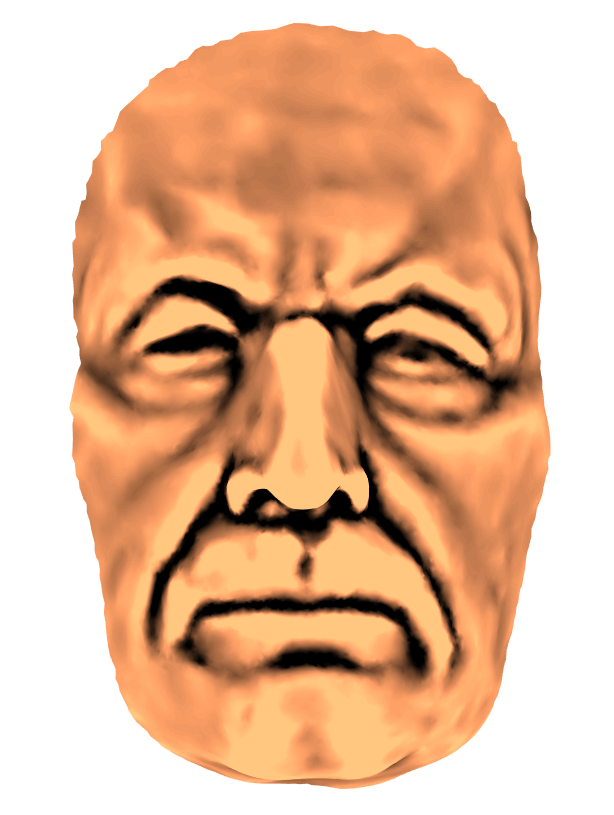}
\includegraphics[width=0.25\textwidth]{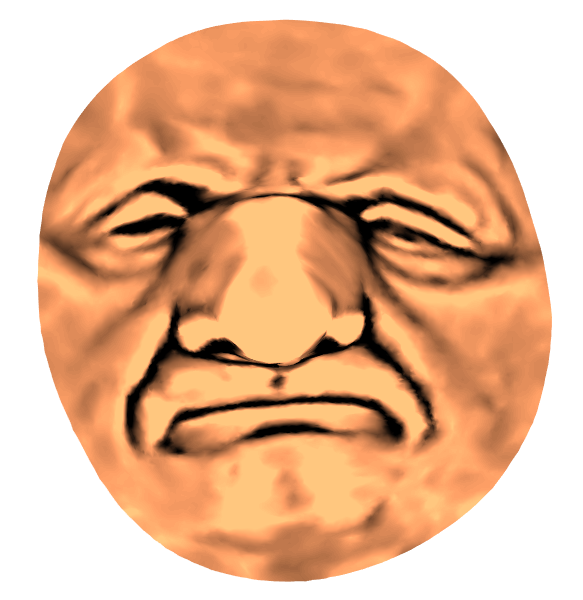}
\caption{Density-equalizing flattening maps with different effects obtained by our proposed algorithm. Left: the Niccol\`{o} da Uzzano model and the density-equalizing flattening map with the eyes magnified. Right: the Max Planck model and the density-equalizing flattening map with the nose magnified.}
\label{fig:effects}
\end{figure}

In addition, our algorithm can produce density-equalizing flattening maps with different effects by changing the input population. Figure \ref{fig:effects} shows two examples with different input populations. For the Niccol\`{o} da Uzzano model, we set the input population to be the area of each triangle element on the mesh except the eyes, and the population at the eyes to be 2 times the area of the triangles there. For the Max Planck model, we set the input population to be the area of each triangle element on the mesh except the nose, and the population at the nose to be 1.5 times the area of the triangles there. It can be observed that the resulting density-equalizing maps are respectively with the eyes and the nose magnified. 

\subsection{Numerical results of our algorithm}

\begin{table}[t]
\centering
\begin{tabular}{|C{17mm}|C{17mm}|C{15mm}|C{17mm}|C{15mm}|C{15mm}|}\hline
\textbf{Surface} & \textbf{No.~of triangles} & \textbf{Time (s)} & \textbf{No.~of iterations} & \textbf{Median of density} & \textbf{IQR of density} \\ \hline
Square & 10368 & 0.9858 & 5 & 1.0126 & 0.0799 \\ \hline 
Hexagon & 6144 & 0.4394 & 5 & 1.0227 & 0.0556 \\ \hline 
Gaussian & 10368 & 0.8667 & 4 & 1.0070 & 0.0307 \\ \hline 
Peaks & 4108 & 0.2323 & 4 & 1.0024 & 0.1325 \\ \hline 
Lion & 33369 & 2.4075 & 5 & 1.0220 & 0.1277 \\ \hline 
Niccol\`{o} da Uzzano & 25900 & 3.0077 & 8 & 1.0305 & 0.0823 \\ \hline
Max Planck & 26452 & 3.2395 & 11 & 1.0252 & 0.0633 \\ \hline
Human face & 6912 & 1.2356 & 6 & 1.0084 & 0.0571 \\ \hline 
US Map (Romney) & 46587 & 10.8280 & 3 & 1.0027 & 0.0146 \\ \hline 
US Map (Obama) & 46587 & 12.8330 & 4 & 0.9998 & 0.0147 \\ \hline 
US Map (Trump) & 46587 & 12.5154 & 4 & 1.0024 & 0.0176 \\ \hline 
US Map (Clinton) & 46587 & 12.8733 & 4 & 1.0003 & 0.0248 \\ \hline 
\end{tabular}
\caption{The performance of our algorithm. For each surface, we record the number of triangle elements, the time taken (in seconds) for the entire density-equalization algorithm (including the computation of initial map and the construction of sea), the number of iterations taken in the iterative scheme, and the median and interquartile range of the density defined on each triangle element by $\frac{\text{Given population}}{\text{Final area}}$.}
\label{table:result}
\end{table}

For a quantitative analysis, Table \ref{table:result} lists the detailed statistics of the performance of our algorithm on a number of simply-connected open meshes. From the time spent and the number of iterations needed, it can be observed that the convergence of our proposed algorithm is fast. Also, the median and the inter-quartile range of the density show that the density is well equalized under our algorithm. The experiments reflect the efficiency and accuracy of our proposed algorithm.

\begin{table}[t!]
\centering
\begin{tabular}{|C{40mm}|C{20mm}|C{20mm}|C{18mm}|}\hline
\textbf{Input population} & \textbf{Time by GN (s)} & \textbf{Time by our method (s)} & \textbf{Map difference} \\ \hline 
$5 + \frac{(x - \bar{x}) + (y - \bar{y})}{50}$ & 4.843 & 1.753 & 0.0009 \\ \hline 
$1 + e^{-\frac{(x - \bar{x})^2 + (y- \bar{y})^2}{1000}}$  & 4.452 & 1.501 & 0.0015 \\ \hline 
$2.5 + \sin \frac{\pi(x - \bar{x})}{25}$  & 4.959 & 1.776 & 0.0013 \\ \hline 
$1.5 + \sin \frac{\pi(x - \bar{x})}{25} \sin \frac{\pi(y - \bar{y})}{25}$  & 4.592 & 1.488 & 0.0026 \\ \hline 
\end{tabular}
\caption{Comparing the performance of our algorithm and GN deployed on a $100 \times 100$ square mesh with various input population functions. Here, the map difference is given by \text{mean}$\left(\frac{|z_{\text{prev}} - z_{\text{ours}}|}{\text{side length of square}}\right)$, where $z_{\text{prev}}$ and $z_{\text{ours}}$ are respectively the complex coordinates of the density-equalizing mapping results by GN and our method. $\bar x$ and $\bar y$ are the mean of the $x$-coordinates and the $y$-coordinates of the square.}
\label{table:comparison}
\end{table}

\begin{figure}[t!]
\centering
\includegraphics[width=0.24\textwidth]{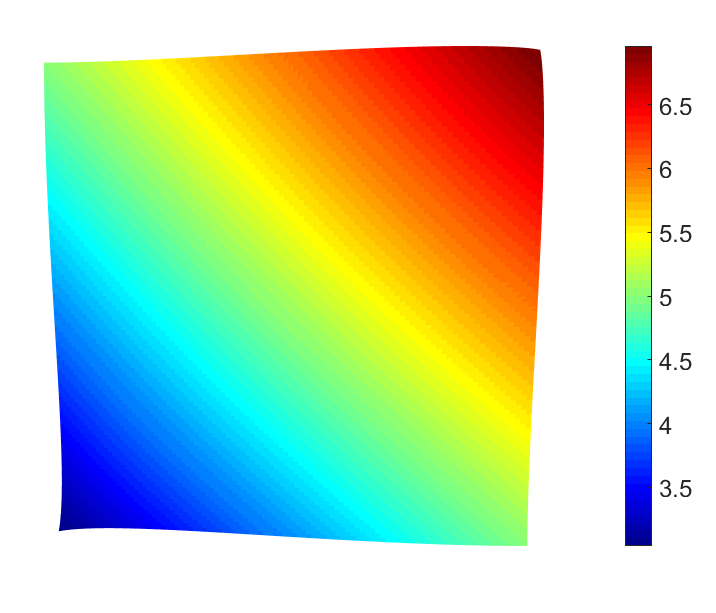}
\includegraphics[width=0.24\textwidth]{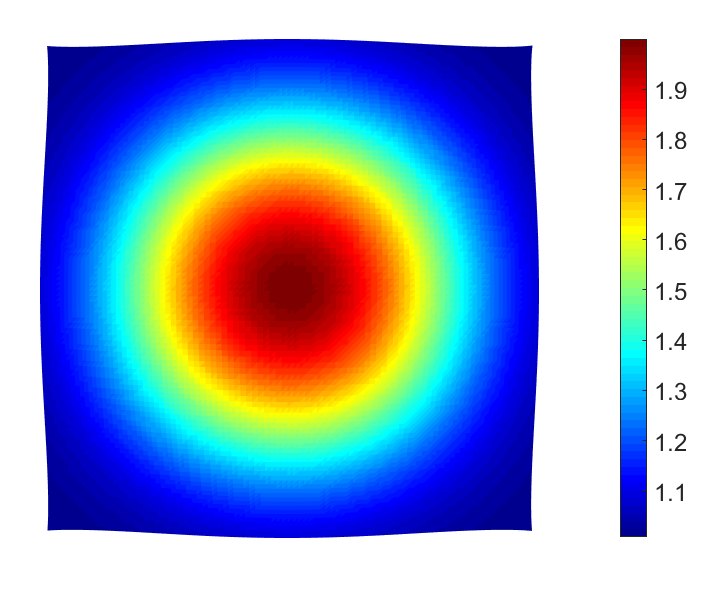}
\includegraphics[width=0.24\textwidth]{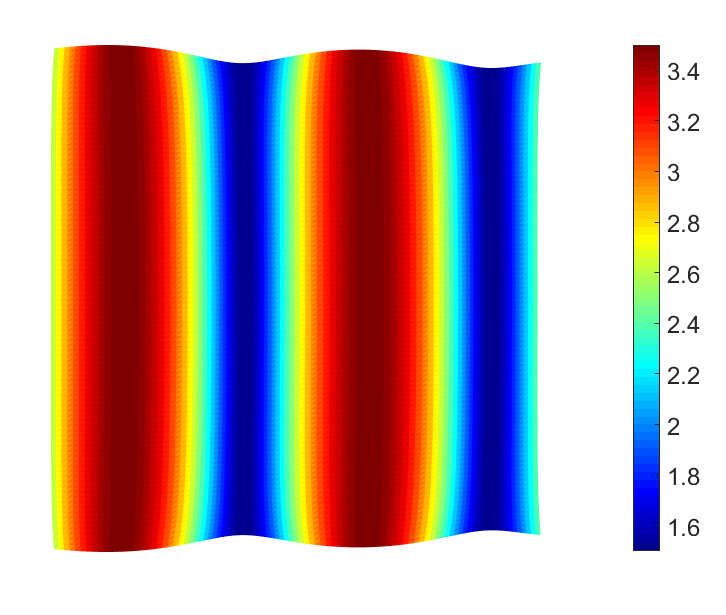}
\includegraphics[width=0.24\textwidth]{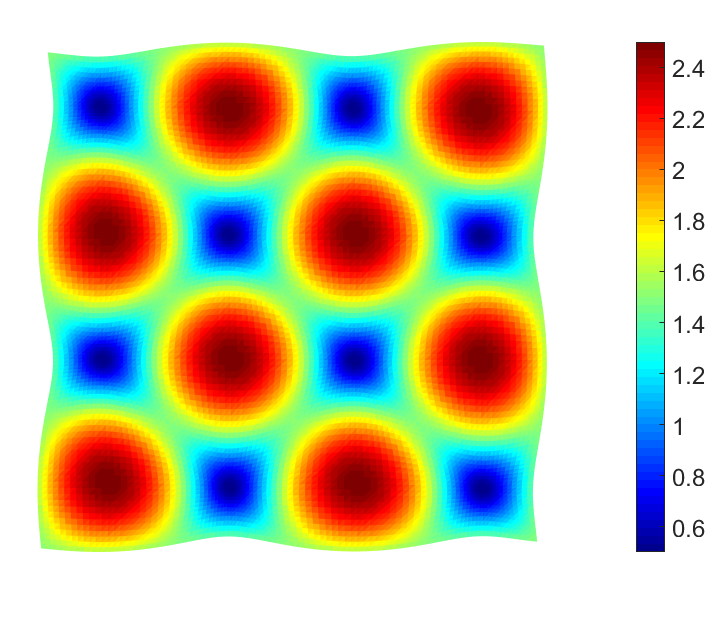}\\
\includegraphics[width=0.24\textwidth]{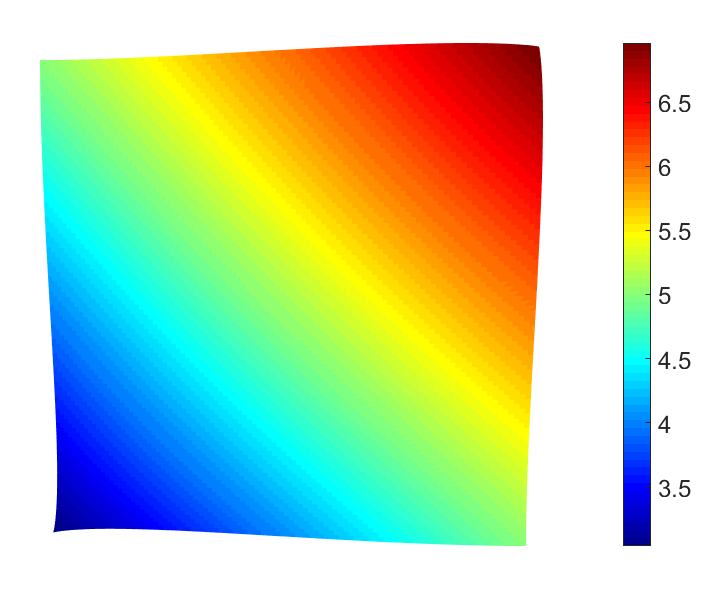}
\includegraphics[width=0.24\textwidth]{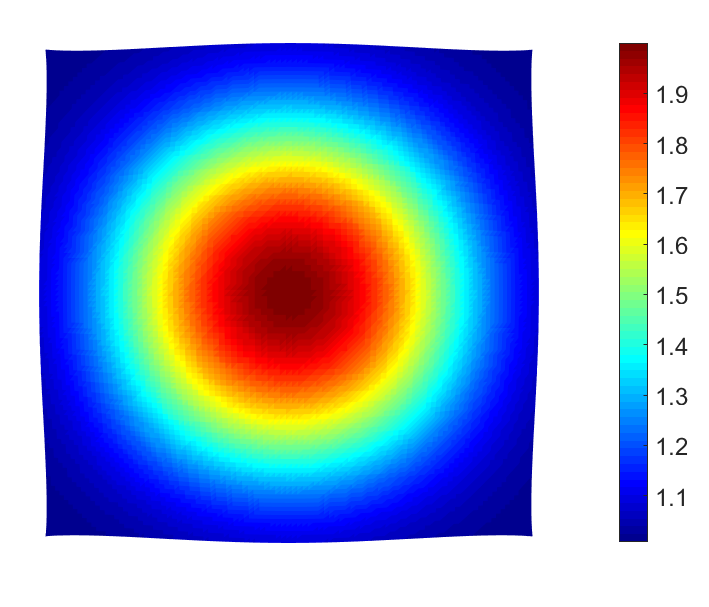}
\includegraphics[width=0.24\textwidth]{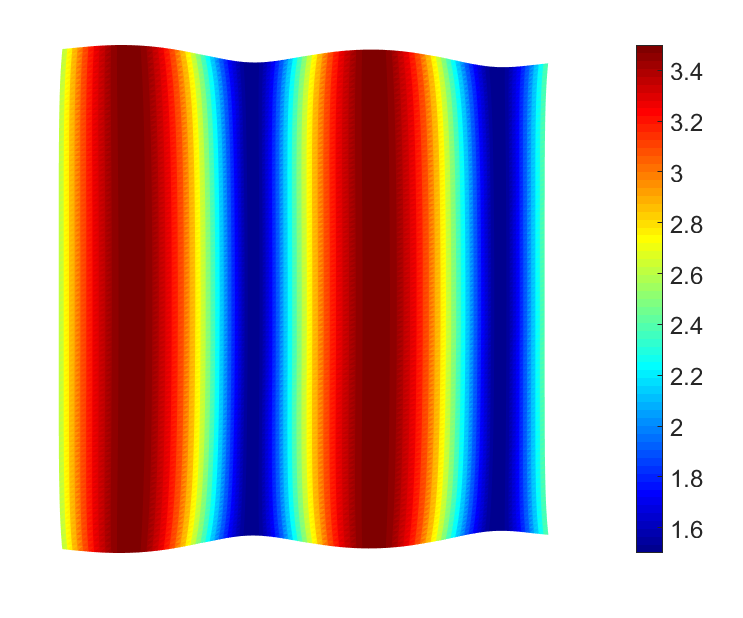}
\includegraphics[width=0.24\textwidth]{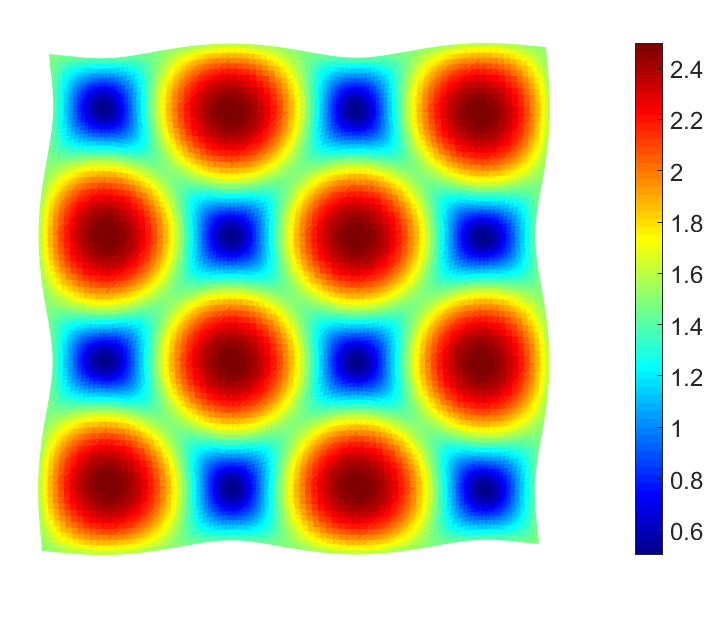}
\caption{The density-equalizing maps produced by our proposed algorithm and GN with various input population functions. Each column shows a set of experimental results color-coded by the input population function as described in Table \ref{table:comparison}. Top row: the results by GN. Bottom row: the results by our method. It can be observed that our method produces results as accurate as those by GN.}
\label{fig:square_comparison}
\end{figure}

We are also interested in analyzing the difference in the performance of our algorithm and GN with implementation available online~\cite{Cart}. Recall that GN works on finite difference grids. Therefore, for a fair comparison, we deploy the two methods on a $100 \times 100$ square grid $\{ (x,y) \in \mathbb{Z}^2: 0 \leq x, y \leq 99\}$ and compare the results. Following the suggestion by GN, the dimension of the sea is set to be two times the linear extent of the square grid in running GN. Various initial populations are tested for the computation of density-equalizing maps. Figure \ref{fig:square_comparison} shows several density-equalizing mapping results produced by the two methods. The statistics of the experiments are recorded in Table \ref{table:comparison}. With the accuracy well preserved, our method demonstrates an improvement on the computational time by over 60\% when compared to GN. 

We make a remark about the deformation of the sea under the density-equalizing process. Let $r$ be the displacement of every point at the sea from the origin before the deformation, and $\Delta r = r_{\text{final}} - r$ be the change in displacement of the point under the density-equalizing process. Figure \ref{fig:displacement_sea} shows several log--log plots of $\Delta r$ against $r$ outside the unit circle. We observe that $\Delta r$ and $r$ are related by the relationship $\Delta r \propto r^{-2}$ at the outer part of the sea. This suggests that setting a coarser sea at the outermost part does not affect the accuracy of the density-equalizing map.

\begin{figure}[t!]
 \centering
 \includegraphics[width=0.32\textwidth]{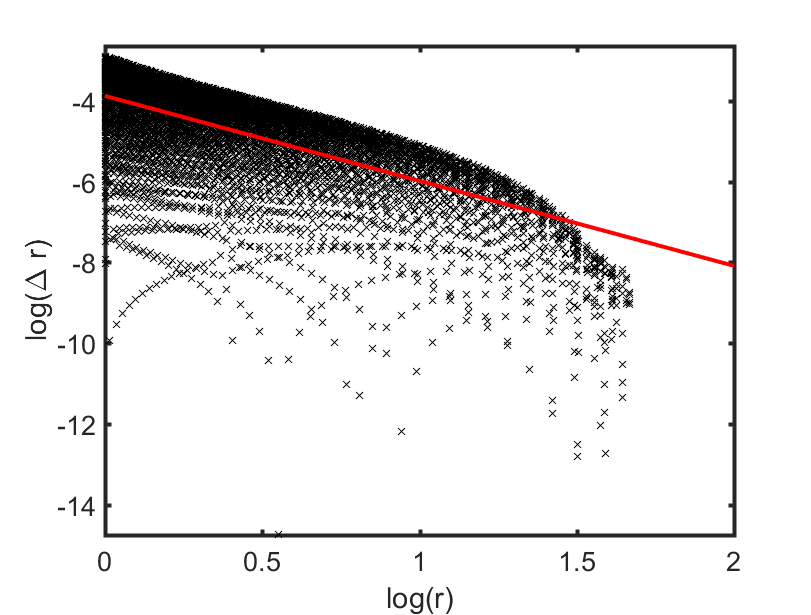}
 \includegraphics[width=0.32\textwidth]{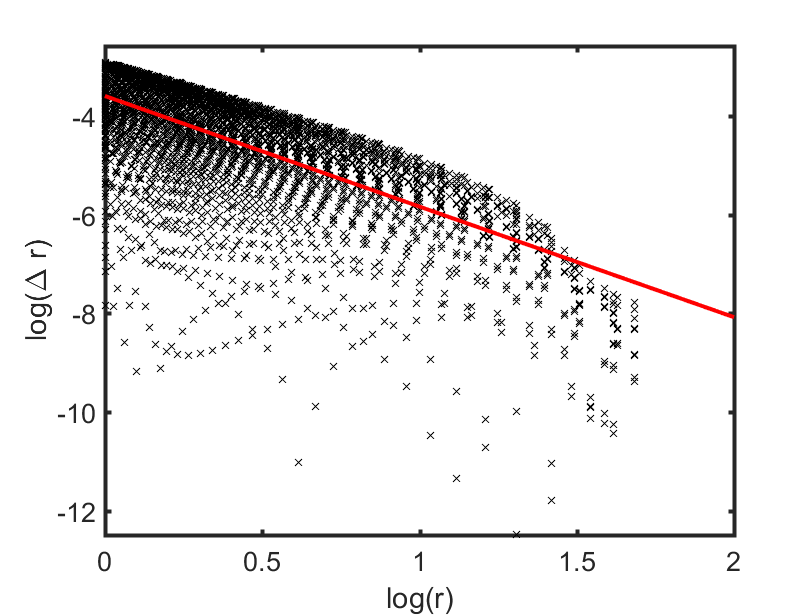}
 \includegraphics[width=0.32\textwidth]{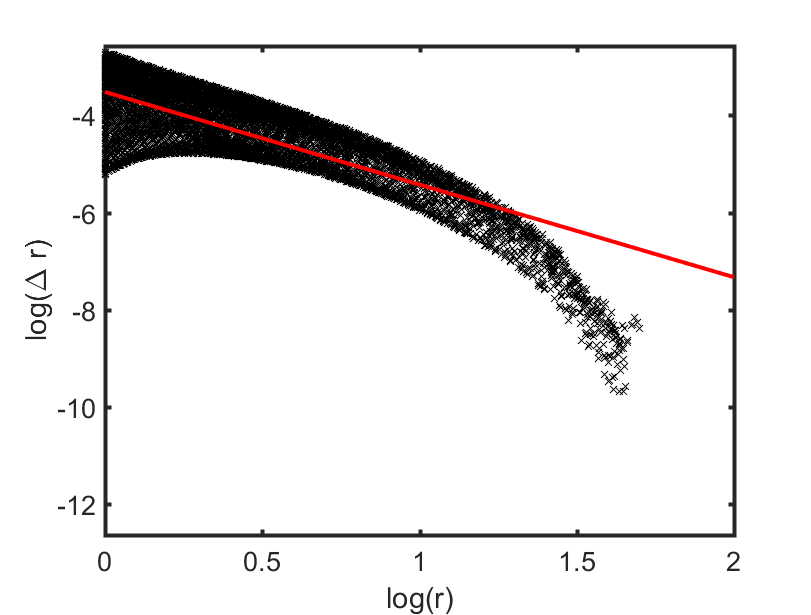}
 \caption{The log--log plot of the displacement of the sea under our density-equalizing algorithm. To study the effect at the outer sea, only the region outside the unit circle is considered. The $x$-axis represents the logarithm of the displacement of every point at the sea from the origin. The $y$-axis represents the logarithm of the change in the displacement under the density-equalizing map. Each cross represents a point at the sea, and the red line is the least-squares line. Left: the square example. Middle: the hexagon example. Right: the human face example.}
 \label{fig:displacement_sea}
\end{figure}

\section{Applications} \label{sect:applications}
Our proposed density-equalizing mapping algorithm is useful for various applications. In this section, we discuss two applications of our algorithm.

\subsection{Data visualization}

%

\begin{figure}[t!]
 \centering
 \includegraphics[width=0.49\textwidth]{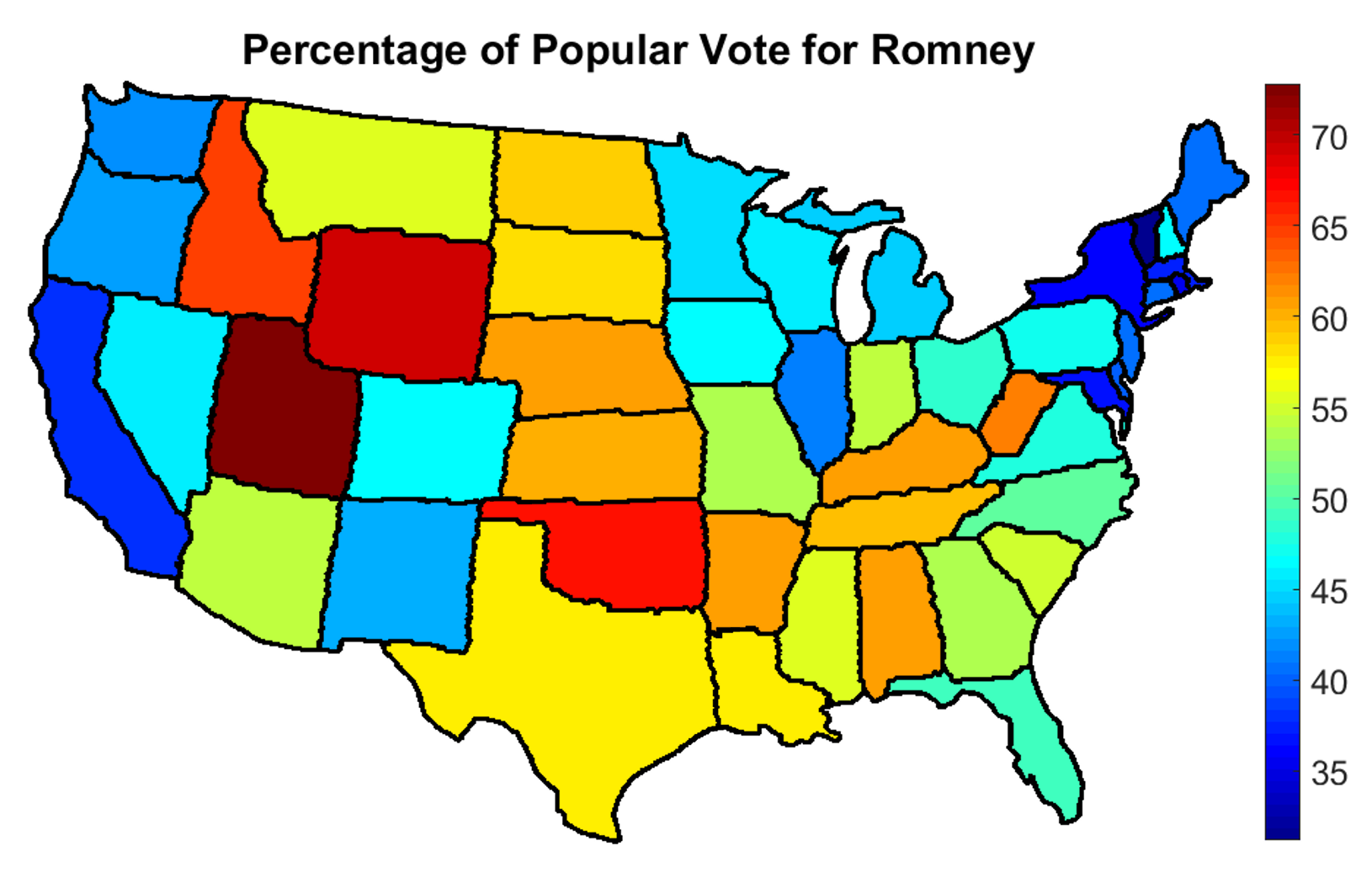}
 \includegraphics[width=0.49\textwidth]{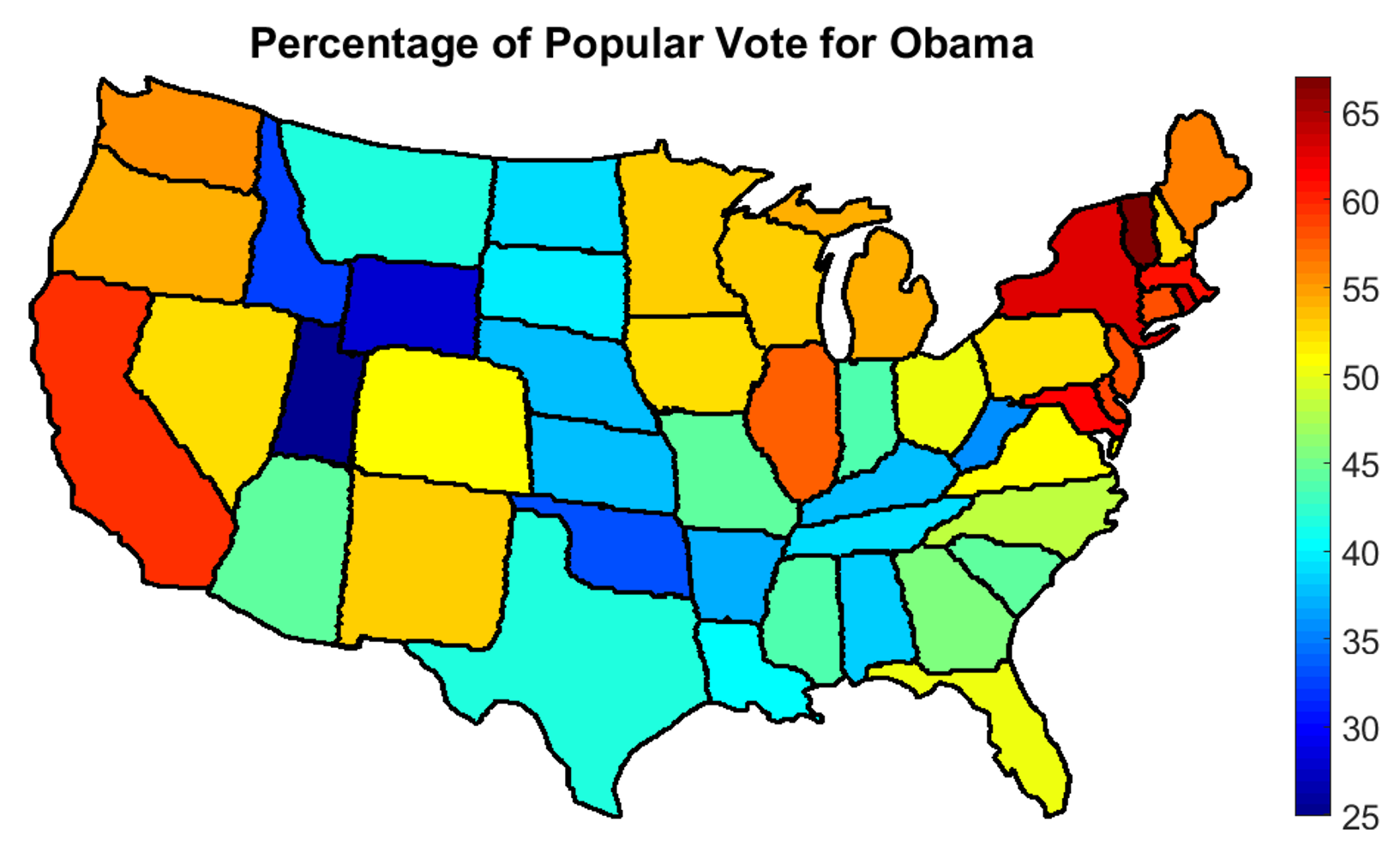}
 \includegraphics[width=0.49\textwidth]{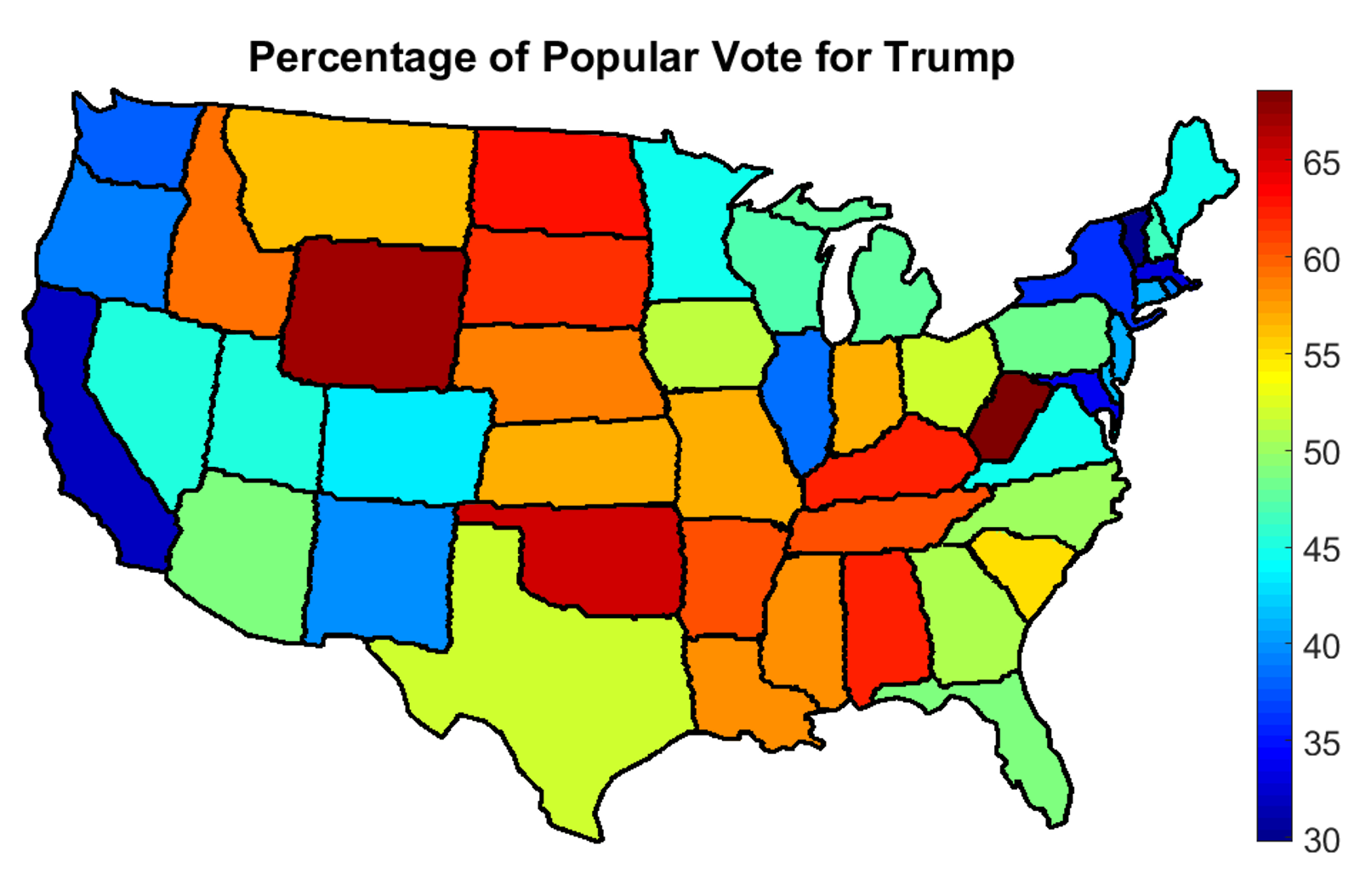}
 \includegraphics[width=0.49\textwidth]{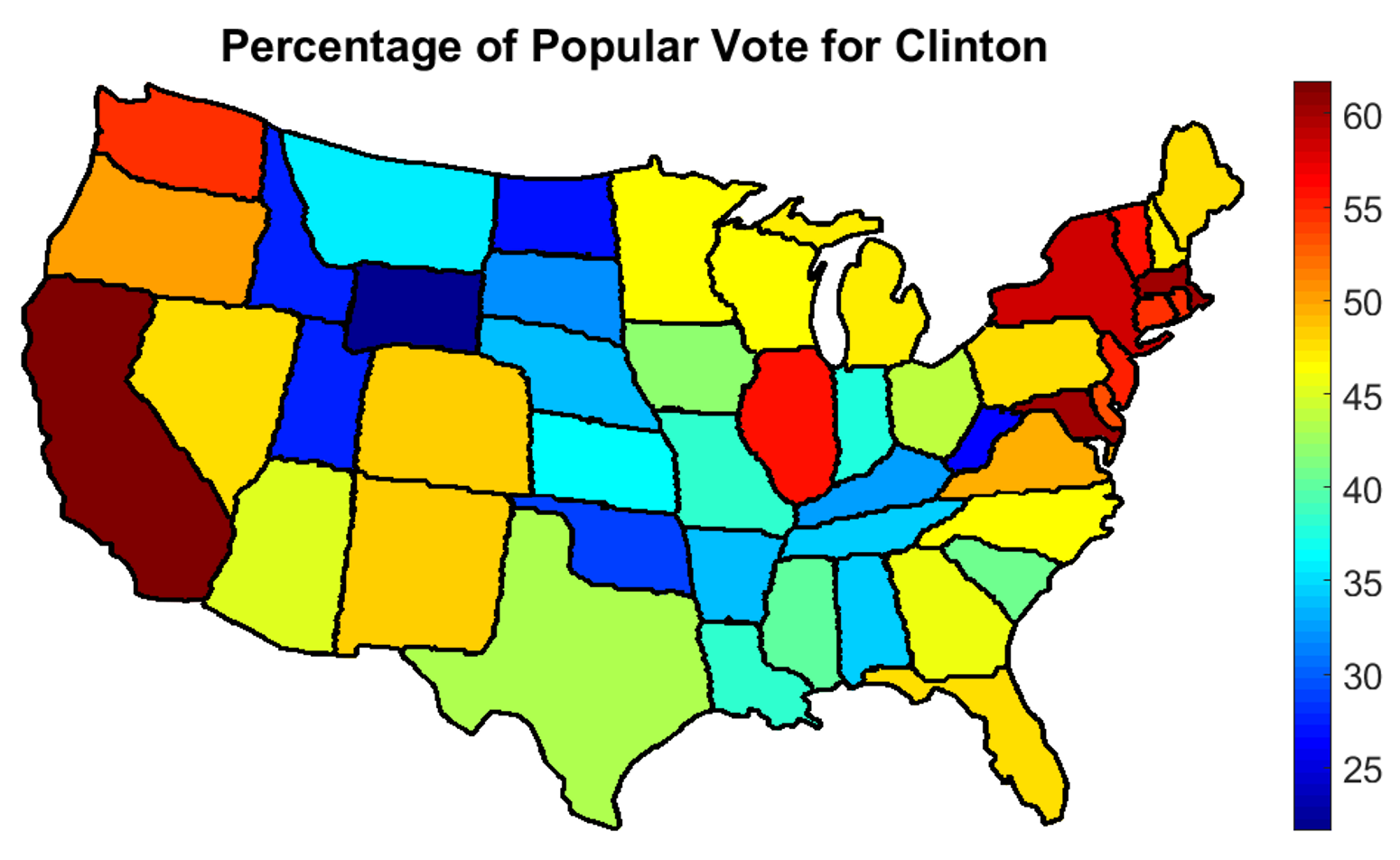}
 \caption{Percentage of popular vote in each state visualized on density-equalizing US maps (only including the contiguous 48 states). The triangulations are set to be transparent for enhancing the visual quality.}
 \label{fig:presidential}
\end{figure}

Similar to GN, our density-equalizing mapping algorithm can be used for data visualization. We consider visualizing the percentage of popular vote for the Republican party and the Democratic party in each state in the 2012 and 2016 US presidential elections. To visualize the data, we set the population on each state on a triangulated US map as the percentage of popular vote obtained by the two parties and run our proposed algorithm. Figure \ref{fig:presidential} shows the density-equalizing results. It can be observed that for the Republican party, the east coast and west coast are significantly shrunk. This reflects the relatively low percentage of popular vote obtained at those regions. By contrast, for the Democratic party, the east coast and west coast are significantly enlarged under the density-equalization, which reflects the relative high percentage of popular vote there. Some differences between the 2012 and the 2016 results can also be observed. For instance, the area of California becomes more extreme on the density-equalizing maps in 2016 when compared to those in 2012. For Trump, California has further shrunk on the map while for Clinton, it has further expanded on the map. Another example is West Virginia. It can be observed that the area of it has decreased in the map for Clinton when compared to that for Obama, while the area has increased in the map for Trump when compared to that for Romney. This example of US presidential election shows the usefulness of our density-equalizing mapping algorithm in data visualization.

\subsection{Adaptive surface remeshing}

\begin{figure}[t!]
 \centering
 \includegraphics[width=0.32\textwidth]{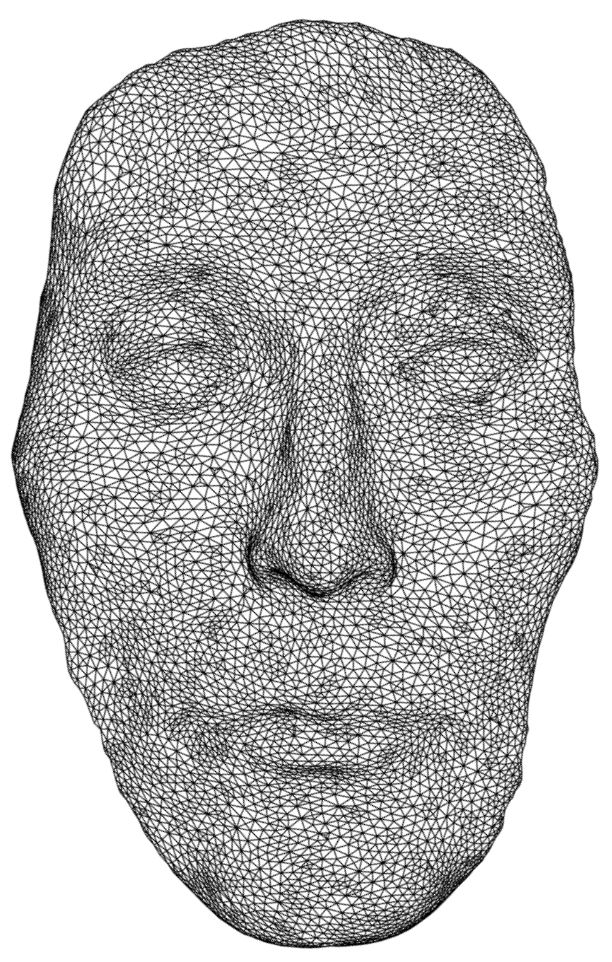}
 \includegraphics[width=0.32\textwidth]{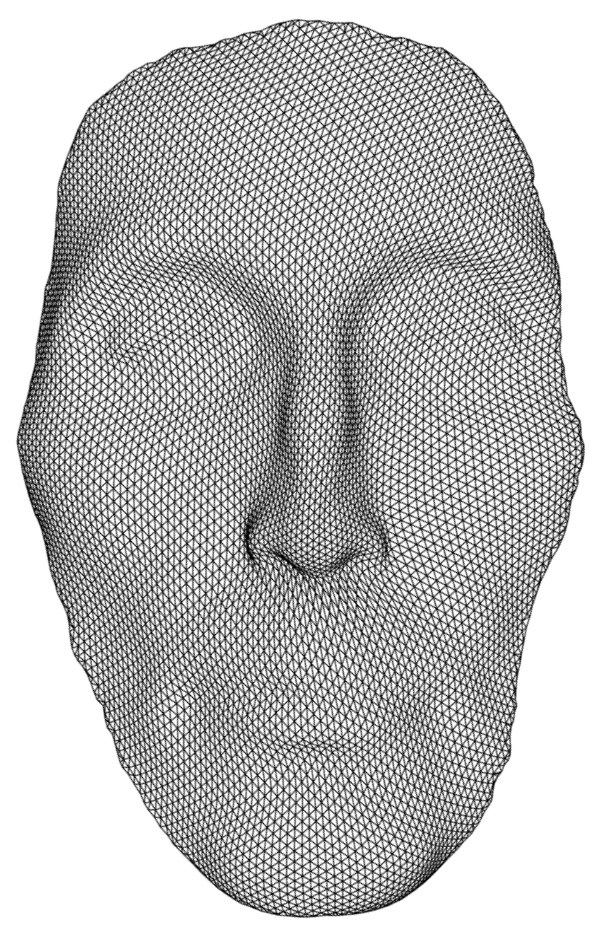}
 \includegraphics[width=0.32\textwidth]{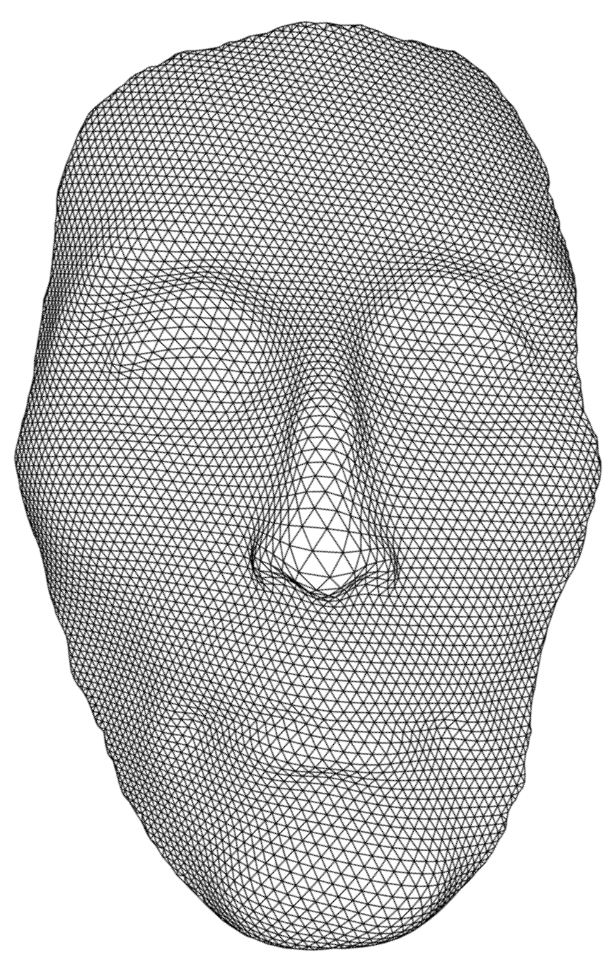}
 \includegraphics[width=0.42\textwidth]{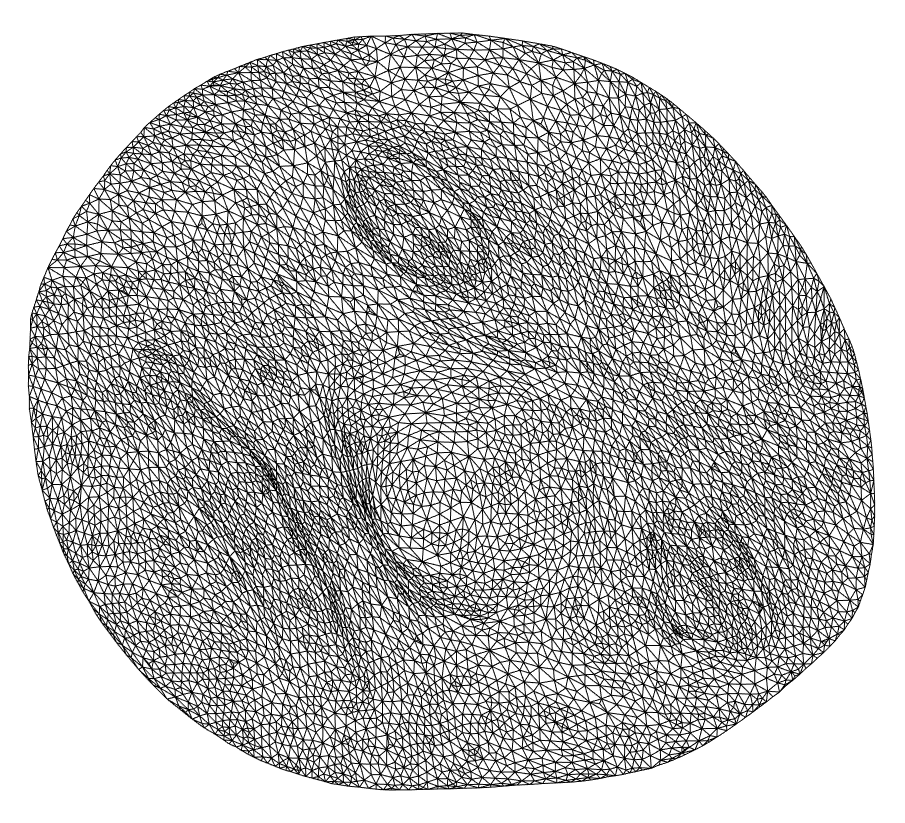}
 \includegraphics[width=0.4\textwidth]{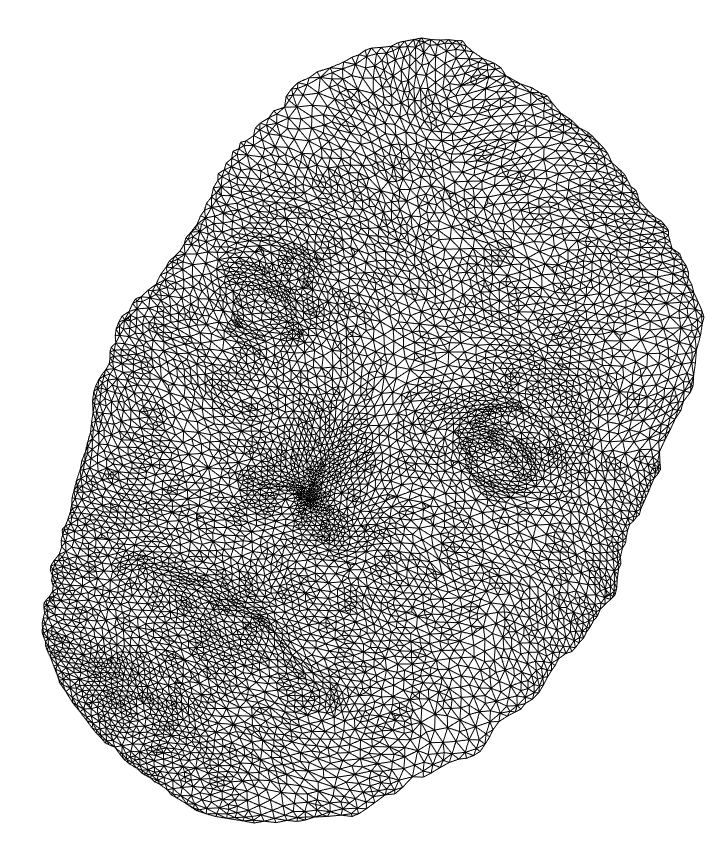}
 \caption{Remeshing a human face. Top left: the original human face. Top middle: the remeshing result via our parameterization. Top right: the remeshing result via the free-boundary conformal parameterization by Desbrun et al.~\cite{Desbrun02}. Bottom left: the density-equalizing parameterization by our algorithm. Bottom right: the free-boundary conformal parameterization by Desbrun et al.~\cite{Desbrun02}.}
 \label{fig:remeshing}
\end{figure}

Note that the input population affects the size of different regions in the resulting density-equalizing map. Specifically, a higher population leads to a magnification and a lower population leads to a shrinkage. Using this property of the density-equalizing map, we can perform adaptive surface remeshing easily.

Let $S$ be a surface to be remeshed. Given a population, we first compute the density-equalizing map $f:S \to \mathbb{C}$. Now consider a set of uniformly distributed points $\mathcal{P}$ on the density-equalizing map. We triangulate the set of points and denote the triangulation by $\mathcal{T}$. Then, using the inverse mapping $f^{-1}$, we can interpolate $\mathcal{P}$ onto $S$. The mesh $(f^{-1}(\mathcal{P}),\mathcal{T})$ gives a remeshed representation of the surface $S$.

Now, to increase the level of details at a region of $S$, we can set a larger population at there in running our density-equalizing mapping algorithm. Since the region is enlarged in the mapping result and $\mathcal{P}$ is uniformly distributed, more points will lie on that part and hence the inverse mapping will map more points back onto that particular region of $S$. This completes our adaptive surface remeshing scheme.

Figure \ref{fig:remeshing} shows an example of remeshing a triangulated human face using the abovementioned scheme. We set the population to be the triangle area of the original mesh in running our algorithm. To highlight the advantage of the use of our density-equalizing map, we compare the remeshing result with that obtained via a conventional free-boundary conformal parameterization method~\cite{Desbrun02}. The eyes and the nose of the human face are enlarged in our density-equalizing mapping result, while such features are shrunk in the conformal parameterization because of the preservation of conformality. This difference causes significantly different remeshing results. Also, note that the representation of the nose is poor in the remeshing result via conformal parameterization. By contrast, the remeshing result via our density-equalizing mapping algorithm is with a more balanced distribution of points. This example demonstrates the strength of our algorithm in surface remeshing.

\section{Discussion} \label{sect:discussion}
In this work, we have proposed an efficient algorithm for computing density-equalizing flattening maps of simply-connected open surfaces in $\R^3$. When compared to GN, our method is particularly well suited to planar domains with complex geometry because of the use of triangular meshes. With this advantage, our method can possibly lead to a wider range of applications of density-equalizing maps in data visualization. Our method is also well suited for handling disk-like surfaces in $\mathbb{R}^3$ such as human faces. This suggests a new approach for adaptive surface remeshing via density-equalizing maps. When compared to the existing parameterization-based remeshing approaches, our method can easily control the remeshing quality at different regions of the surfaces by changing the population at those regions.

Since the density diffusion process is solved on a triangular mesh, the triangle quality affects the accuracy of the discretization and hence the final density-equalization result. If the triangular mesh consists of highly irregular triangle elements, the ultimate density distribution may not be optimal even after the algorithm converges. Also, since the discretization is based on the triangles for every step in our algorithm, if the input population is too extreme or highly discontinuous, the triangles may become highly irregular at a certain step and affects the accuracy of the subsequent results. In other words, triangle meshes with moderate triangle quality and input population are desired. Besides, for surfaces in $\R^3$ with a highly tubular shape and with the boundary lying at one end, the flattening step may causes extremely squeezed regions on the planar domain. In this case, the accuracy of the subsequent computations for density equalization may be affected.

Our current work primarily focuses on simply-connected surfaces in $\R^3$, but it can be naturally extended to general surfaces. For instance, density-equalizing maps of multiply-connected surfaces can be computed by filling up the holes and treating them as the sea in our proposed algorithm. Similarly, density-equalizing maps of multiple disconnected surfaces can be handled with the aid of a large sea. 

\section*{Acknowledgments}
This work was partially supported by a fellowship from the Croucher Foundation (GPT Choi).

\section*{Appendix}
We prove that the curvature-based curve flattening step in Sec. \ref{sect:step1} produces a simple closed convex curve.
\begin{proposition}
Let $\varphi:[0,l_{\gamma}] \to \R^2$ be the arclength parameterized curve defined as in Sec. \ref{sect:step1}. Consider the new curve $\varPhi:[0,l_{\gamma}] \to \R^2$ defined by 
\begin{equation}
\varPhi(s) =  \varphi(s) - \frac{s}{l_{\gamma}}\left(\varphi(l_{\gamma}) - \varphi(0)\right).
\end{equation}
$\varPhi$ is a simple closed convex curve.
\end{proposition}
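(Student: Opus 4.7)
Working in complex notation, write $\varphi'(s) = e^{i\theta(s)}$ and set $\vec{c} = \tfrac{1}{l_{\gamma}}\bigl(\varphi(l_{\gamma}) - \varphi(0)\bigr)$, so that $\varPhi(s) = \varphi(s) - s\,\vec{c}$. The plan is first to observe closedness, then to show that the tangent vector of $\varPhi$ is a translate of $e^{i\theta(s)}$ by an interior point of the unit disc, from which monotonicity of the turning angle, and therefore convexity and simplicity, will follow.

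\textbf{Step 1 (closedness and nonvanishing tangent).} Evaluating at $s=l_{\gamma}$ immediately gives $\varPhi(l_{\gamma}) = \varphi(0) = \varPhi(0)$, so $\varPhi$ is closed. Differentiating yields
\begin{equation}
\varPhi'(s) \;=\; e^{i\theta(s)} - \vec{c}, \qquad \vec{c} \;=\; \tfrac{1}{l_{\gamma}} \int_0^{l_{\gamma}} e^{i\theta(u)}\,du .
\end{equation}
By the triangle inequality $|\vec{c}| \le 1$, with equality only if $e^{i\theta(u)}$ is constant in $u$. But $\theta(l_{\gamma}) - \theta(0) = \int_0^{l_{\gamma}} k(t)\,dt = 2\pi$, so $\theta$ is not constant and hence $|\vec{c}| < 1$. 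In particular $\varPhi'(s) \ne 0$ for every $s$, so $\varPhi$ is a regular curve.

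\textbf{Step 2 (monotonicity of the turning angle).} Let $\psi(s) = \arg \varPhi'(s)$. A short calculation gives
\begin{equation}
\frac{d\psi}{d\theta} \;=\; \operatorname{Re}\!\left(\frac{e^{i\theta}}{e^{i\theta} - \vec{c}}\right) \;=\; \operatorname{Re}\!\left(\frac{1}{1 - \vec{c}\,e^{-i\theta}}\right) \;=\; \frac{1 - \operatorname{Re}(\bar{\vec{c}}\,e^{i\theta})}{|1 - \vec{c}\,e^{-i\theta}|^{2}} \;\ge\; \frac{1 - |\vec{c}|}{|1 - \vec{c}\,e^{-i\theta}|^{2}} \;>\; 0 ,
\end{equation}
since $|\vec{c}|<1$. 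Because $\theta(s)$ is nondecreasing in $s$ with total increment $2\pi$, composing gives that $\psi(s)$ is monotone nondecreasing on $[0,l_{\gamma}]$ with $\psi(l_{\gamma}) - \psi(0) = 2\pi$.

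\textbf{Step 3 (convexity and simplicity).} I now invoke the classical theorem (see e.g.\ do Carmo, Ch.~1) that a regular closed planar curve whose tangent angle is monotone and whose total turning equals $2\pi$ is simple and bounds a convex region. Applying this to $\varPhi$ concludes the proof.

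The only genuinely substantive point is Step 2, namely the strict inequality $|\vec{c}|<1$: this is what forces the turning map $\theta \mapsto \arg(e^{i\theta}-\vec{c})$ to be a homeomorphism of $S^{1}$ and prevents $\varPhi'$ from vanishing. Everything else is either an immediate computation or an appeal to the standard characterization of convex planar curves via monotone tangent rotation.
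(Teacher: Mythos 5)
Your proof is correct, but it takes a genuinely different route from the paper's. The paper works directly with the signed curvature: it computes $X'Y''-X''Y'$ explicitly, uses Cauchy--Schwarz together with the arclength bound $\|\varphi(l_{\gamma})-\varphi(0)\|\le l_{\gamma}$ to get $k_{\varPhi}\ge 0$, and then controls the turning number by bounding $\int k_{\varPhi}\,ds$ above by $2\pi\,\frac{1+C}{(1-C)^{3}}<4\pi$ with $C=\|\varphi(l_{\gamma})-\varphi(0)\|/l_{\gamma}$ --- a step that genuinely relies on the standing assumption $C\ll 1$ (the bound fails once $C$ exceeds roughly $0.16$). You instead write $\varPhi'(s)=e^{i\theta(s)}-\vec{c}$ with $\vec{c}$ the average of $e^{i\theta}$ over the curve, deduce $|\vec{c}|<1$ rigorously from the non-constancy of $e^{i\theta}$ (forced by the total turning $2\pi$ and continuity of $\theta$), and then get monotonicity of the tangent angle from the M\"obius-type computation $d\psi/d\theta=\operatorname{Re}\bigl((1-\vec{c}\,e^{-i\theta})^{-1}\bigr)>0$, with total turning exactly $2\pi$ because the circle traced by $e^{i\theta}-\vec{c}$ winds once about the origin. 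Both arguments finish by appealing to the same classical fact (a regular closed plane curve with monotone tangent angle and rotation index $1$ is simple and convex). What your route buys is that it dispenses entirely with the quantitative smallness assumption on $\|\varphi(l_{\gamma})-\varphi(0)\|$: you need only $|\vec{c}|<1$, which you actually prove, so your argument is more self-contained and valid under weaker hypotheses than the paper's; it also delivers regularity of $\varPhi$ (nonvanishing $\varPhi'$) explicitly, which the paper leaves implicit in the denominator $(X'^{2}+Y'^{2})^{3/2}$. The one point worth spelling out is the equality case of $|\vec{c}|\le 1$: constancy of $e^{i\theta}$ together with continuity of $\theta$ forces $\theta$ constant, contradicting $\theta(l_{\gamma})-\theta(0)=2\pi$ (the exponential alone does not distinguish $\theta$ from $\theta+2\pi$).
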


\begin{proof}
It is easy to note that $\varPhi(0) = \varphi(0) = \varPhi(l_{\gamma})$ and hence $\varPhi$ is closed. Since $\varphi$ is an arclength parameterized curve, for any $0 \leq a < b \leq l_{\gamma}$, we have
\begin{equation}\label{eqt:straight_line}
\|\varphi(b)-\varphi(a)\| \leq \int_{a}^{b} \|\varphi'(s)\| ds = b - a,
\end{equation}
where the equality holds if and only if $\varphi([a,b])$ is a straight line. In particular, since $\gamma$ is the boundary of the original simply-connected open surface, by our construction of $\varphi$, we have $\|\varphi(l_{\gamma})-\varphi(0)\| \ll l_{\gamma}.$

We now prove that the signed curvature of $\varPhi$, denoted by $k_{\varPhi}$, is non-negative for all $s \in [0,l_{\gamma}]$. Denote $\varphi(s) = (x(s),y(s))$ and $\varPhi(s) = (X(s), Y(s))$. We have
\begin{equation}
\begin{split}
\varPhi' & = (X', Y') \\
& = \left(x'(s) - \frac{1}{l_{\gamma}}\left(x(l_{\gamma}) - x(0)\right), y'(s) - \frac{1}{l_{\gamma}}\left(y(l_{\gamma}) - y(0)\right)\right)\\
& = \left(\cos \theta(s) - \frac{1}{l_{\gamma}}\left(x(l_{\gamma}) - x(0)\right), \sin \theta(s) - \frac{1}{l_{\gamma}}\left(y(l_{\gamma}) - y(0)\right)\right)
\end{split}
\end{equation} 
and
\begin{equation}
\varPhi'' = (X'', Y'') = (x'', y'') = \left(- k_{\varphi}(s) \sin \theta(s) , k_{\varphi}(s) \cos \theta(s)\right).
\end{equation} 
Hence, we have
\begin{equation} \label{eqt:curvature_relation}
X'Y''-X''Y' = k_{\varphi}(s) \left(1 - \frac{\left(x(l_{\gamma}) - x(0)\right) \cos \theta(s) - \left(y(l_{\gamma}) - y(0)\right) \sin \theta(s)}{l_{\gamma}}\right).
\end{equation} 
Now recall that by \eqref{eqt:nonnegative_curvature_gamma}, $k_{\varphi}(s) \geq 0$ for all $s$. Also, we have

\begin{equation} \label{eqt:inequality}
\begin{split}
&\left(x(l_{\gamma}) - x(0)\right) \cos \theta(s) - \left(y(l_{\gamma}) - y(0)\right) \sin \theta(s) \\
\leq & \sqrt{\left(x(l_{\gamma}) - x(0)\right)^2 +  \left(y(l_{\gamma}) - y(0)\right)^2} \sqrt{\cos^2 \theta(s) + \sin^2 \theta(s)}\\
= & \sqrt{\left(x(l_{\gamma}) - x(0)\right)^2 +  \left(y(l_{\gamma}) - y(0)\right)^2}\\
= & \left\|\varphi(l_{\gamma}) - \varphi(0)\right\|\\
\leq & l_{\gamma}.
\end{split}
\end{equation} 
Here, the first inequality follows from the Cauchy--Schwarz inequality, and the second inequality follows from \eqref{eqt:straight_line}. Therefore, we have
\begin{equation}
1 - \frac{\left(x(l_{\gamma}) - x(0)\right) \cos \theta(s) - \left(y(l_{\gamma}) - y(0)\right) \sin \theta(s)}{l_{\gamma}} \geq 0 \text{ for all } s \in [0,l_{\gamma}],
\end{equation}
and it follows that
\begin{equation}\label{eqt:nonnegative_curvature}
k_{\varPhi}(s) = \frac{X'Y''-X''Y'}{(X'^2+Y'^2)^{3/2}} \geq 0 \text{ for all } s \in [0,l_{\gamma}].
\end{equation} 

We proceed to show that $\varPhi$ is simple. Note that since $\varPhi$ is a closed plane curve, the total curvature of $\varPhi$ should satisfy
\begin{equation}
\int_0^{l_{\varphi}} k_{\varPhi}(s) ds  = 2\pi n_{\varPhi},
\end{equation}
where $n_{\varPhi}$ is the turning number of $\varPhi$. From the above results, we have
\begin{equation}
\begin{split}
2\pi n_{\varPhi} &= \int_0^{l_{\varphi}} \frac{k_{\varphi}(s) \left(1 - \frac{\left(x(l_{\gamma}) - x(0)\right) \cos \theta(s) - \left(y(l_{\gamma}) - y(0)\right) \sin \theta(s)}{l_{\gamma}}\right)}{(X'^2+Y'^2)^{3/2}} ds\\
&\leq \left(\int_0^{l_{\varphi}} k_{\varphi}(s) ds \right) \max_{s \in [0, l_{\gamma}]} \left|\frac{\left(1 - \frac{\left(x(l_{\gamma}) - x(0)\right) \cos \theta(s) - \left(y(l_{\gamma}) - y(0)\right) \sin \theta(s)}{l_{\gamma}}\right)}{(X'^2+Y'^2)^{3/2}}\right|\\
&= 2\pi \max_{s \in [0, l_{\gamma}]} \left( \frac{1 - \frac{\left(x(l_{\gamma}) - x(0)\right) \cos \theta - \left(y(l_{\gamma}) - y(0)\right) \sin \theta}{l_{\gamma}}}{\left(1+\frac{(x(l_{\gamma})-x(0))^2+(y(l_{\gamma})-y(0))^2}{l_{\gamma}} - \frac{2\left( \left(x(l_{\gamma}) - x(0)\right) \cos \theta + \left(y(l_{\gamma}) - y(0)\right) \sin \theta \right)}{l_{\gamma}}\right)^{3/2}}\right).
\end{split}
\end{equation}
Substituting $A = \frac{x(l_{\gamma}) - x(0)}{l_{\gamma}}$ and $B = \frac{y(l_{\gamma}) - y(0)}{l_{\gamma}}$, the above equation becomes
\begin{equation}
\begin{split}
&2\pi \max_{s \in [0, l_{\gamma}]} \left( \frac{1 - A \cos \theta(s) + B \sin \theta(s)}{\left(1+A^2+B^2 - 2A\cos \theta(s) - 2B\sin \theta(s)\right)^{3/2}}\right)\\
= & 2\pi \max_{s \in [0, l_{\gamma}]} \left( \frac{1 - C \cos (\theta(s)+\eta)}{\left(1+ C^2 - 2C\cos(\theta(s)-\eta))\right)^{3/2}}\right),
\end{split}
\end{equation}
where $C = \sqrt{A^2+B^2} = \frac{\left\|\varphi(l_{\gamma}) - \varphi(0)\right\|}{l_{\gamma}} \ll 1$ and $\eta = \tan^{-1}\frac{B}{A} = \tan^{-1} \frac{y(l_{\gamma}) - y(0)}{x(l_{\gamma}) - x(0)}$. Hence, it is easy to see that
\begin{equation}
\max_{s \in [0, l_{\gamma}]} \left( \frac{1 - C \cos (\theta(s)+\eta)}{\left(1+ C^2 - 2C\cos(\theta(s)-\eta))\right)^{3/2}}\right) \leq \frac{1 + C}{\left(1-C\right)^3} < 2.
\end{equation}
This implies that 
\begin{equation}
2\pi n_{\varPhi} < 4 \pi \Rightarrow n_{\varPhi} < 2.
\end{equation}
It follows that $n_{\varPhi} = 1$ and hence $\varPhi$ is simple. Finally, note that a simple closed curve is convex if and only if its signed curvature does not change sign~\cite{Gray98}. From \eqref{eqt:nonnegative_curvature}, we conclude that $\varPhi$ is a simple closed convex curve.
\end{proof}
\end{document}